\newtheorem{theorem}{Theorem}
\newtheorem{lemma}{Lemma}
\newtheorem{definition}{Definition}
\newtheorem{corollary}{Corollary}
\definecolor{mblue}{rgb}{0,0.1,1.0}
\definecolor{usc}{rgb}{0.7, 0, 0}    
\definecolor{mred}{rgb}{0.9,0.1,0.1}
\definecolor{mpurple}{rgb}{0.6,0.1,0.8}
\definecolor{morange}{rgb}{0.8,0.4,0.1}
\definecolor{mblack}{rgb}{0,0,0}
\newcommand{\mcg}[1]{{\color{mblack}{#1}}}
\DeclareMathOperator*{\argmax}{arg\,max}
\DeclareMathOperator*{\argmin}{arg\,min}
\begin{document}

\title{\huge Two-sided Delay Constrained Scheduling:\\ Managing Fresh and Stale Data
%Strict Delay and Freshness Constrained Scheduling: Two-Sided Deadlines}
\\
\thanks{Portions of this paper have been \mcg{accepted} to the IEEE International Conference on Communications (ICC 2023) \cite{our_ICC_scheduling}. This work is funded in part by one or more of the following grants: 
NSF CCF-1817200, 
ARO W911NF1910269, 
DOE DE-SC0021417, 
Swedish Research Council 2018-04359, 
NSF CCF-2008927, 
NSF CCF-2200221, 
ONR 503400-78050, 
ONR N00014-15-1-2550, 
and the USC + Amazon Center on Secure and Trusted Machine Learning.}
}

\author{\large{Mustafa Can Gursoy and Urbashi Mitra}
	\\ \normalsize Department of Electrical and Computer Engineering, University of Southern California, Los Angeles, CA, USA
	\\ E-mails: \{mgursoy, ubli\}@usc.edu
}

\maketitle
%%%%%%%%%%%%%%%%%%%%%%%%%%%%%%%%%%%%%%%%%%%%%%
%\thispagestyle{plain}
\pagestyle{plain}
%%%%%%%%%%%%%%%%%%%%%%%%%%%%%%%%%%%%%%%%%%%%%%

\begin{abstract}
%Energy or time-efficient scheduling is of particular interest in wireless communications, with applications in sensor network design, cellular communications, and more. In many cases, wireless packets to be transmitted have deadlines that upper bound the times before their transmissions. In this paper, motivated by emerging applications in security-critical communications, age of information, and molecular communications, we expand the wireless packet scheduling framework to scenarios which involve limits on the time \emph{after} transmission, in addition to the conventional pre-transmission delay constraints. As a result, we introduce the scheduling problem under two-sided individual delay constraints. Subject to said two-sided deadlines, we provably solve the optimal (energy-minimizing) offline packet scheduling problem. Leveraging this result and the inherent duality between rate and energy, we propose and solve the completion-time-optimal offline packet scheduling problem under the introduced two-sided framework. Overall, the developed theoretical framework can be utilized in applications wherein packets have finite lifetimes both before and after their transmission (\emph{e.g.,} security-critical applications), or applications with joint strict constraints on packet delay and information freshness.
Energy or time-efficient scheduling is of particular interest in wireless communications, with applications in sensor network design, cellular communications, and more. In many cases, wireless packets to be transmitted have deadlines that upper bound the times before their transmissions, to avoid staleness of transmitted data. In this paper, motivated by emerging applications in security-critical communications, age of information, and molecular communications, we expand the wireless packet scheduling framework to scenarios which involve strict limits on the time \emph{after} transmission, in addition to the conventional pre-transmission delay constraints. As a result, we introduce the scheduling problem under two-sided individual deadlines, which captures systems wherein transmitting too late (stale) and too early (fresh) are both undesired. Subject to said two-sided deadlines, we provably solve the optimal (energy-minimizing) offline packet scheduling problem. Leveraging this result and the inherent duality between rate and energy, we propose and solve the completion-time-optimal offline packet scheduling problem under the introduced two-sided framework. Overall, the developed theoretical framework can be utilized in applications wherein packets have finite lifetimes both before and after their transmission (\emph{e.g.,} security-critical applications), or applications with joint strict constraints on packet delay and information freshness.
\end{abstract}
\begin{IEEEkeywords}
Scheduling, post-transmission delay constraints, two-sided delay constraints, energy minimization, completion time minimization. 
\end{IEEEkeywords}
%\vspace{-0.5cm}

\section{Introduction}
\label{sec:introduction}

\par Energy-efficient scheduling is a classical problem in wireless communications, which seeks to characterize and exploit the fundamental trade-off between transmission rate and packet delay \cite{schedule1,schedule2,schedule3}. In particular, for scheduling wireless communication packets over a single hop link, \cite{uysal2002_singledelay} addresses optimal offline scheduling with a single common transmission deadline, before which all packets should be transmitted. Expanding the single-deadline system, \cite{chen2008_predelay,individual_2,calculus_scheduling,individual_4,predelay_burst,individual_finite} consider each packet having its own individual delay constraint (deadline), before which it needs be processed. Energy-efficient scheduling under similar deadlines is also extended to multi-hop systems in \cite{wanshi_multihop}. As each delay constraint relates to, and limits, the time interval before the end of a packet's transmission \cite{chen2008_predelay,fangwen_predelay}, we refer to this conventional delay constraint as a \emph{pre-transmission delay constraint}, which ensures that the data is not \emph{stale}.

\par In this paper, we introduce an additional, alternative delay constraint that limits the time \emph{after} a packet has been processed by the scheduler. Vis-à-vis its conventional pre-transmission counterpart, we call this new deadline the \emph{post-transmission} delay constraint.  Modern applications, which will be described in the sequel,  necessitate ensuring that the data is not only not too stale, but also \emph{not too fresh.}
In addition to bounding delay for time-sensitive applications (\emph{e.g.,} video streaming), a conventional (pre-transmission) delay constraint can also be interpreted as ``expiration" at the scheduler node after which the packet becomes unusable. Thus, the scheduler needs to transmit it before its expiry. Using this interpretation, the post-transmission delay constraint arises from expiry at the \emph{destination} of the packet. In other words, a post-transmission delay constraint arises in scenarios wherein processing a packet ``too early" is undesirable, as it causes the packet to expire/become unusable at the next node that utilizes it. Combining the conventional pre-transmission deadlines with their post-transmission counterparts, we arrive at scheduling problems that are subject to \emph{two-sided delay constraints.} 
%\par Herein, we first address the energy-efficient scheduling problem under this new two-sided constrained framework. 
Several scenarios which incur two-sided delay constraints, wherein transmitting too late and too early are both undesirable include: \\
%\subsection{Several Applications}
%Below are several motivating applications wherein both pre- and post-transmission delay constraints are both relevant.
%Applications in which data being ``too stale" and``too fresh'' are both undesirable include: 
\noindent \textbf{Security-critical relaying:} Suppose a sender (S) is to send information towards a destination (D) that requires restricted access (\emph{e.g.,} a hash, private key, \emph{etc.}), through a relay-aided communication link. To ensure security, S transmits the information by fragmenting it into $M$ packets due reassembly at D. The goal of the relay (R, the scheduler) is to transmit each of the $M$ packets so that they are available at D at a certain desired time $t_R$ for successful reassembly. In case any node is compromised, the security protocol requires each packet to have a limited time it can spend at any node, after which it expires and its content is erased. Thus, R needs to transmit each packet before its individual deadline to avoid expiration (\emph{i.e.,} a pre-transmission delay constraint). As a similar finite lifetime is also employed at D, each packet needs to depart R \emph{later} than a certain deadline to avoid early expiry and ensure availability at $t_R$, yielding a post-transmission delay constraint on each packet. 

\noindent \textbf{Information Freshness:} A central controlling unit (the scheduler) seeks to achieve coordinated action among $M$ agents at a certain time $t_R$, via sending each agent a packet of instructions and status information regarding the system of interest. The packets need to be transmitted sufficiently in advance to allow for preparation prior to action, or due to packet delay limitations (pre-transmission delay constraints). Furthermore, the need for ensuring the agents to have fresh/up-to-date information regarding the system at $t_R$ requires avoiding transmitting ``too early", which corresponds to a post-transmission delay constraint on each packet.

\par \mcg{Note that given the bits of the packet can be changed by the scheduler until they are transmitted, and assuming the time-critical information is located at the end of the packet with a length that is negligible to the whole packet's length, the age of the time-critical information would be approximately zero at the time of departure. In such a system, a strict post-transmission deadline on the departure time would then upper bound the age of the time-critical information \cite{aoi} within the packet at time $t_R$.}

\noindent \textbf{Molecular communication:} In a physical layer molecular communication (MC) setup, a transmitter receives stimulus/input molecules in a time sequence \cite{stochastic_molecule_arrivals,queueing_in_MC,old_markovian}, converts them through a chemical reaction, and releases them as its output signal (akin to processing a job, with a molecule being the fundamental job unit and the transmitter being the scheduler). Input molecules exhibit degradation thus have finite lifetimes \cite{bacteria_OV_SG,degradation_1,degradation_2}, thus the MC transmitter is subject to pre-transmission delay constraints. The output molecules are also subject to molecular degradation, thus processing too early may cause the molecule to degrade before providing utility, corresponding to a post-transmission delay constraint. 

\par Herein, we first address the energy-efficient scheduling problem under this new two-sided constrained framework. In addition, we shall also consider the ``dual" problem, completion time minimization, which seeks to minimize the total time of the required packet transmissions subject to a fixed energy budget \cite{converse_original}. The version of this problem without individual delay constraints is addressed in \cite{converse_original}, wherein iterative and dynamic programming-based approaches are developed and empirically demonstrated to converge to the optimal offline allocation. The solutions of \cite{converse_original} are generalized in \cite{converse_energy_harvesting} to a scheduler with energy harvesting capabilities (\emph{i.e.,} the energy budget is refueled over time), and to packets with varying sizes. In \cite{comptime_wpcn,comptime_wpcn_coleri}, completion time minimization with energy harvesting is considered in the context of wireless powered communication networks, wherein an increased waiting time increases harvested energy and allows for a higher rate of data transmission. Thus, a trade-off between harvested energy and communication delay is considered and exploited for system optimization. In \cite{comptime_wpcn,individual_throughputmax}, a similar problem to the transmission completion time problem is formulated, which seeks to optimize the average throughput within a fixed time interval. In addition, \cite{individual_throughputmax} considers individual transmission deadlines, similar to pre-transmission delay constraints, in its formulation. None of these works consider strict deadlines that limit the time intervals \emph{after} transmission (\emph{i.e.,} ``not too fresh"), nor two-sided bounds on allowable departure times. Overall, to the best of our knowledge, energy- or time-efficient scheduling have not been previously considered under the two-sided delay constrained framework, which we treat herein. 

\par The contributions of this paper can be summarized as follows:
\begin{enumerate}
    \item We expand the current energy-efficient packet scheduling framework by incorporating the newly considered post-transmission delay constraints, and formulate the energy-efficient scheduling problem under two-sided delay constraints. \mcg{Though the developed framework is presented through a packet scheduling scenario, its theoretical structure is abstracted away from a specific application, and can be applied or adapted to whenever a set of jobs arrive as a sequence in time and exhibit two-sided deadlines that limit their completion time.}
    
%    \par In this section, we describe the theoretical problem of interest in this paper: packet scheduling under two-sided individual delay constraints. However, although the discussion throughout the paper is presented through a particular packet scheduling scenario, we note that the developed framework is presented in a purely theoretical manner, and is abstracted from a specific application. %\mcg{From this perspective, the developed theoretical framework describes a job scheduling scenario where each job arrives}

    \item We discuss the feasibility of the scheduling problem, and provide a provably optimal offline scheduling algorithm that minimizes energy consumption under two-sided individual deadlines.
    \item We formulate and address the dual problem, in which the goal is to minimize the total transmission completion time subject to a fixed energy budget and the two-sided delay constraints. 
    \item We provide an algorithm that achieves provably optimal scheduling under the newly considered post-transmission delay constraints only.
    \item Following this, we cover pre-transmission delay constraints as well, and provide an offline algorithm that provably minimizes completion time under two-sided individual delay constraints. 
    %%%%\item \mcg{Our numerical results show that the two-sided solution }
\end{enumerate}

\par %To the best of our knowledge, energy- nor time-efficient scheduling have not been previously considered under the two-sided delay constrained framework we treat herein. 
In our preliminary work \cite{our_ICC_scheduling}, we formulated the energy-efficient scheduling problem under two-sided delay constraints, discussed its feasibility and provided the optimal offline scheduling algorithm. This paper extends and completes \cite{our_ICC_scheduling} by formulating the dual, time-minimization problem under two-sided delay constraints, providing the optimal offline scheduling algorithm and its accompanying optimality proof, and providing the full versions of the proofs in \cite{our_ICC_scheduling}.

\par The rest of the paper is organized as follows: Section \ref{sec:system_model} presents the two-sided delay constrained problem formulation and provides the definitions used throughout the paper. Section \ref{sec:prelims} recites prior literature on scheduling with no individual delays \cite{uysal2002_singledelay} in order to motivate the developed scheduling strategy. Section \ref{sec:two_sided_delay} presents our proposed offline scheduling algorithm under two-sided delay constraints, whose optimality is proved in Appendix \ref{ap:proof_of_twosided}. Section \ref{sec:completiontime} addresses the dual problem, where the scheduler seeks to minimize the total completion time subject to an energy cost. The section provides algorithms subject to post-delays only (whose optimality proof is in Appendix \ref{ap:proof_of_comptime_post}), as well as to two-sided delays (optimality proof in-line). Section \ref{sec:numericalresults} presents numerical results on energy and time minimization under two-sided deadlines, and Section \ref{sec:conclusions} concludes the paper.

\section{System Model}
\label{sec:system_model}

\par Herein, we consider a scheduler that inputs packets in a certain temporal sequence\footnote{We assume packets of equal sizes in this paper.} with arrival times denoted by $t_i$, $i=1,\dots,M$, where $M$ is the total number of arriving packets. Without loss of generality, we assume $t_1 \doteq 0$ throughout the paper. Arriving packets are processed and transmitted by the scheduler, or if the scheduler is busy serving another packet, wait in queue until transmission. Throughout the paper, \mcg{similar to our key prior works \cite{uysal2002_singledelay,chen2008_predelay,converse_original} we consider first-in, first-out (FIFO) processing, which implies that the packets are processed in the order they are received}\footnote{We leave scheduling under arbitrary, non-FIFO ordering for future work.}. To illustrate the setup, Figure \ref{fig:systemdiag} is presented to provide an instantiation of packet arrival and departure times.

\begin{figure}[!t]
    \centering
    \includegraphics[width = 0.43\textwidth]{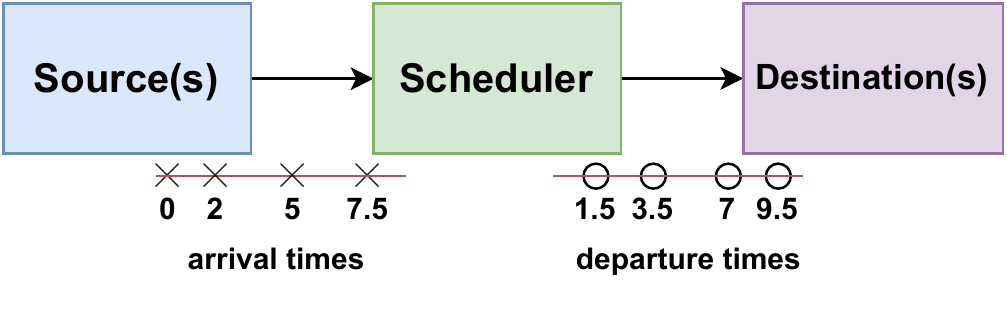}
    \caption{\mcg{The system model of interest with $M=4$. Packet arrival times at scheduler $\boldsymbol{t} = [0,2,5,7.5]$ due to some source(s), and end of transmission (departure) times $[1.5,3.5,7,9.5]$, possibly towards one or many destination(s).}}
    \label{fig:systemdiag}
\end{figure}

\par The vector $\boldsymbol{\tau} = [\tau_1 , \dots , \tau_M]^\top$ is defined as the vector that holds the processing times of packets. That is, for each packet $i$, $\tau_i$ denotes the time it takes for the scheduler to transmit the packet. For FIFO ordering, the departure time of packet $i$ is then equal to $\sum_{j=1}^{i} \tau_j$.

\par In the model we consider herein, it is assumed that each packet has a finite lifetime that can spend on i) the scheduler, and ii) the destination after it is transmitted. We assume that after this lifetime has expired, the content of the packet is erased from the node and the packet becomes unusable due to expiry. The goal of the scheduler is then to:
\begin{itemize}
    \item transmit every packet by a specified \emph{end time} $t_E$,
    \item optimize a certain cost associated with transmission durations $w(\boldsymbol{\tau})$ (\textit{e.g.,} energy consumption of transmitting the packet),
    \item and satisfy each packet's expiry-induced deadlines at both the scheduler and destination.
\end{itemize}
\textbf{The Cost:} Throughout the paper, \mcg{similar to the considerations of \cite{uysal2002_singledelay,chen2008_predelay,converse_original,converse_energy_harvesting,individual_throughputmax,equalcost_ref}}, we assume that the total cost incurred by all $M$ packet transmissions, $w(\boldsymbol{\tau})$, to have the following properties:
\begin{enumerate}
    \item The overall cost incurred by the system is the sum of individual costs of each transmission $w(\boldsymbol{\tau}) = \sum_{i=1}^{M} w(\tau_i)$.
    \item For a particular argument $\tau$, $w(\tau)$ is strictly convex and decreasing in $\tau$.\footnote{For wireless applications, the convexity of $w(\tau)$ can be motivated by inverting the Shannon's capacity formula (see \cite[Equation (1.1)]{wanshi_thesis} and the discussion therein for further details).}
    \item $w(\tau_i) > 0$ for all $i$.
    \item \mcg{Given $\tau_i$, $w(\tau_i)$ does not depend on time index $i$.}
\end{enumerate}
Bullet Point 2 implies that transmitting a packet faster incurs an increasingly higher cost. Throughout the manuscript, we will use the words cost and energy interchangeably. \\
\textbf{Delay Constraints:} We consider the communication link to be successful with respect to a chosen reference time $t_R$, if the following criteria are met at time $t_R$:
\begin{enumerate}
    \item Before any packet expires, the scheduler has successfully transmitted them to the destination,
    \item All transmitted packets have successfully arrived at the destination by the reference time $t_R$,
    \item At $t_R$, all packets are active (\textit{i.e.,} have survived and are not yet expired) at the destination.
\end{enumerate}
Given a packet's arrival time $t_i$, the first and second constraints together impose a hard deadline on the time at which packet $i$ departs the scheduler. In particular, to avoid either expiry at the scheduler or ensure that it successfully arrives to the destination by $t_R$, the scheduler has to upper bound the packet's allowed time for processing. As this phenomenon relates to the time interval \emph{before} transmission, we call this constraint a \emph{pre-transmission delay constraint}, and denote it by $T_{\mathrm{pre},i}$, also referred to as a \emph{pre-delay} constraint for brevity. 
%If the expiration-induced lifetime of a packet is $T^{(A)}_{\mathrm{exp},i} > 0$, and the propagation delay after its emission is $T_{\mathrm{prop},i}$, the pre-delay constraint is calculated as
% \begin{equation}
%     \label{eq:pre-delay}
%     T_{\mathrm{pre},i} = \min \{T^{(A)}_{\mathrm{exp},i}, t_R - t_i - T_{\mathrm{prop},i} \} > 0.
% \end{equation}
As will be detailed in Section \ref{sec:two_sided_delay}, we assume $T_{\mathrm{pre},i} > 0$, as otherwise, the scheduling problem becomes ill-posed.
%we assume $T_{\mathrm{prop},i} < t_R - t_i$ for all $i$ for feasibility, as otherwise, packet $i$ is guaranteed to fail in arriving at the destination by $t_R$. 
Furthermore, we consider the end time of the scheduling frame $t_E = t_M + T_{\mathrm{pre},M}$ throughout the paper.

\par In contrast to the first two conditions, the third constraint imposes a \emph{lower} bound on the departure time of packet $i$. Note that if a packet is transmitted too early, the transmitted packet would not survive until $t_R$ due to expiry. As this time interval limits the time interval \emph{after} a packet's transmission, we call this constraint a \emph{post-transmission delay constraint}, and denote it by $T_{\mathrm{post},i}$. Similar to its pre-delay counterpart, throughout the paper, we also refer to this constraints as the \emph{post-delay} constraint for brevity. Overall, Figure \ref{fig:twosideddelay_demo} depicts the feasible region of transmission imposed by these two constraints.
\begin{figure}[!t]
    \centering
    \includegraphics[width = 0.45\textwidth]{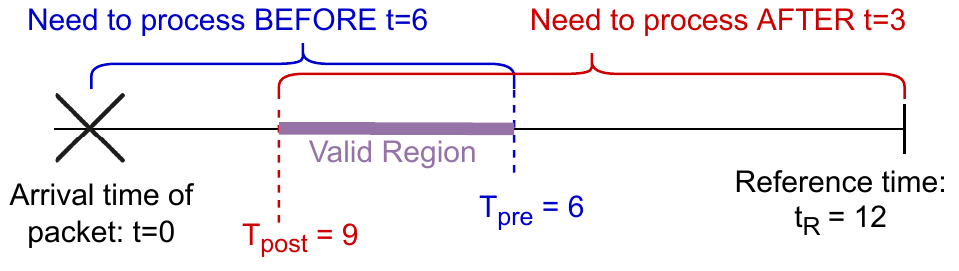}
    \caption{The valid departure time region induced by pre- and post-delay constraints. The packet arrives at time $t=0$, and the reference time is $t_R=12$. Pre- and post-delay constraints are $T_{\mathrm{pre}}=6$ and $T_{\mathrm{post}}=9$, respectively, resulting in a valid departure time region of $[3,6]$.}
    \label{fig:twosideddelay_demo}
\end{figure}

\par As a convention, we define $t_{M+1} \doteq t_E$, and define the time between two packet arrivals (\emph{i.e.,} the inter-arrival time) as $d_i = t_{i+1} - t_{i}$. Thus, for energy minimization, we consider the following optimization problem:
\begin{subequations}
	\begin{align}
		\min\limits_{\boldsymbol{\tau}} \quad & w(\boldsymbol{}{\tau}) = \sum\nolimits_{i=1}^{M} w(\tau_i)\\	
		\text{s.t.} \quad  & \sum\nolimits_{i=1}^{k} \tau_{i} \geq \sum\nolimits_{i=1}^{k} d_{i}, \quad k \in \{1, \dots, M-1\}, \label{eq:optimization_nonidling1} \\ 
		& \sum\nolimits_{i=1}^{M} \tau_{i} = t_E = \sum\nolimits_{i=1}^{M} d_{i}, \label{eq:optimization_nonidling2} \\
		& \sum\nolimits_{i=1}^{k} \tau_{i}-\sum\nolimits_{i=1}^{k-1} d_{i} \leq T_{\mathrm{pre},k}, 		 \hspace{0.1cm} k \in \{1, \ldots, M \} \label{eq:optimization_predelay} \\
		& \sum\nolimits_{i=1}^{k} \tau_{i} \geq t_R - T_{\mathrm{post},k}, \quad k \in \{1, \dots, M\}. \label{eq:optimization_postdelay}
	\end{align}
    \label{eq:optimization_bothdelay}
\end{subequations}

\noindent Here, Conditions \eqref{eq:optimization_predelay} and \eqref{eq:optimization_postdelay} refer to the aforementioned pre- and post-transmission delay constraints, respectively. Conditions \eqref{eq:optimization_nonidling1} and \eqref{eq:optimization_nonidling2} are referred to as the \emph{non-idling} constraints \cite{chen2008_predelay} and they follow from the cost function $w(\tau)$ being decreasing in $\tau$. Note that, if we have $\sum_{i=1}^{k} \tau_{i} < \sum_{i=1}^{k} d_{i}$ for some $i$, this implies that the time interval $[\sum_{i=1}^{k} (d_{i} - \tau_i),\sum_{i=1}^{k} d_{i}]$ is unused. Then, $w(\tau)$ can be further decreased by using this unused interval, which implies that $\boldsymbol{\tau}$ cannot be optimal. Thus, in order to be eligible for optimality, an allocation has to satisfy the non-idling constraints (\eqref{eq:optimization_nonidling1}-\eqref{eq:optimization_nonidling2}). 

\par \mcg{We note that from a numerical solver's perspective, Condition \eqref{eq:optimization_nonidling1} is redundant as long as causality in departure/arrival times is satisfied ($\sum_{i=1}^{k} \tau_i > t_k = \sum_{i=1}^{k-1} d_i$). That is, the solver's solution finds the optimal solution that complies with \eqref{eq:optimization_nonidling1} without explicitly mentioning \eqref{eq:optimization_nonidling1}. However, the non-idling constraint \eqref{eq:optimization_nonidling1} gives a stricter lower bound on the departure time than the causality constraint since $\sum_{i=1}^{k} d_i \geq \sum_{i=1}^{k-1} d_i$, thus we present \eqref{eq:optimization_nonidling1} instead for future reference in our algorithm optimality proofs.}

%  In (1), the non-idling condition (1b) seems to be redundant. It was claimed that the reason for introducing the non-idling condition is to improve the energy saving. This was strengthened by discussing what if the condition is reversed. However, the reviewer’s opinion is that (1b) has not practical meaning and justification and thus can be removed. Then the optimal \tau values can simply show that whether the idling times do not benefit the system performance or not, without artificially imposing a constraint. Technically, the optimal solution with and without (1b) should be same unless, there exists a lower cost by having at least one non-idling interval (if so, is not selecting the latter beneficial?). The same is applicable for the dual formulation.

\par Throughout the paper, we consider optimal \emph{offline} scheduling, in which an idealized scheduler is assumed to have non-causal and perfect information on the arrival times $t_i$ for all $i$, as well as perfect information on the delay constraints $T_{\mathrm{pre},i}$ and $T_{\mathrm{post},i}$.
%Furthermore, we assume that the scheduler can perfectly arrange $\tau_i$. 
We note that solutions to our formulation lower bounds $w(\boldsymbol{\tau})$ that can be achieved by a practical scheduler that is imperfect in delay constraint/arrival information, or is an online scheduling algorithm.

\section{Optimal Scheduling with a Single Common Deadline}
\label{sec:prelims}

\par %To provide insight, we first present a special case for the two-sided scheduling problem. Specifically, we address the cost/energy minimizing packet scheduling problem without any individual delay constraints.

To provide insight, we first address the special case of the problem in \eqref{eq:optimization_bothdelay}, wherein all packets are only subject to a single common deadline (no individual pre- or post-delay constraints). The goal of the scheduler is then to minimize $w(\boldsymbol{\tau}) = \sum_{i=1}^{M} w(\tau_i)$ subject only to the non-idling constraints described by \eqref{eq:optimization_nonidling1}-\eqref{eq:optimization_nonidling2}. Due to the absence of pre-delays (\emph{i.e.,} $T_{\mathrm{pre},i} = t_R-t_i$ for all $i$), the end time (also referred to as the ``single delay" in this section) is $t_E = t_R$. 
%\mcg{For the MC application,} we note that this problem corresponds to the case where i) propagation delay is negligibly small (thus $T_{\mathrm{prop},i} = 0$ for all $i$), and ii) the degradation-induced lifetimes of type-A and type-B molecules are either infinite (no degradation) or significantly larger than the application's time-scale of interest. Overall, t
The optimal offline scheduling under a single delay has been addressed by \cite{uysal2002_singledelay}, whose strategy to compute the optimal $\boldsymbol{\tau}$ is presented in Algorithm \ref{alg:base}.
\begin{algorithm}[!t]
	\fontsize{10}{10}\selectfont
	\begin{algorithmic}[1]
			\caption{Optimal offline scheduling under a single common deadline (\cite{uysal2002_singledelay}).}
			\label{alg:base}
			\renewcommand{\algorithmicrequire}{\textbf{Inputs:}}
			\renewcommand{\algorithmicensure}{\textbf{Output:}}
			\REQUIRE $M$, $\boldsymbol{d}$, $t_E=t_R$ \\
			
			\STATE For the first packet, $m_{1}=\max\limits_{k \in[1, \ldots, M]} \left\{\frac{1}{k} \sum_{i=1}^{k} d_{i} \right\}$ \\
			\STATE The maximizing index $k_{1}=\max \left\{k: \frac{1}{k} \sum_{i=1}^{k} d_{i}=m_{1} \right\}$ \\
			\STATE Set $\tau_{i}=m_{1}, \quad \text { for all } i \in \{1,\dots,k_1\}$ \\ 
			\STATE $j \leftarrow 1$ \\
			\vspace{0.2cm} For rest of the packets, until the end of batch, \textbf{do} \\ 
			%\WHILE{$k_j \leq M$} 
			\STATE $m_{j+1}= \max_{k \in\left[1, \ldots, M-k_{j}\right]} \{\frac{1}{k} \sum_{i=1}^{k} d_{i+k_{j}} \}$ \\
			\STATE $k_{j+1}=k_{j}+\max \{k: \frac{1}{k} \sum_{i=1}^{k} d_{i+k_{j}}=m_{j+1} \}$ \\
			\STATE $\tau_{i}=m_{j}, \quad \text { for all } i \in \{k_{j-1} + 1, \dots,  k_{j} \}$ \\
			\STATE $j \leftarrow j+1$ \\
			%\ENDWHILE \\
			\textbf{return} $\boldsymbol{\tau}$
	\end{algorithmic}
\end{algorithm} 

\par As also mentioned in \cite{uysal2002_singledelay,chen2008_predelay}, an interesting observation is that the optimal $\boldsymbol{\tau}$ is not dependent on the specific $w(\tau)$ function, as long as the cost is strictly convex, decreasing, and non-negative. To illustrate the approach of Algorithm \ref{alg:base}, we present two toy problems in Figure \ref{fig:alg_base_toyproblem}.
\begin{figure}[!t]
    \centering
    \includegraphics[width = 0.46\textwidth]{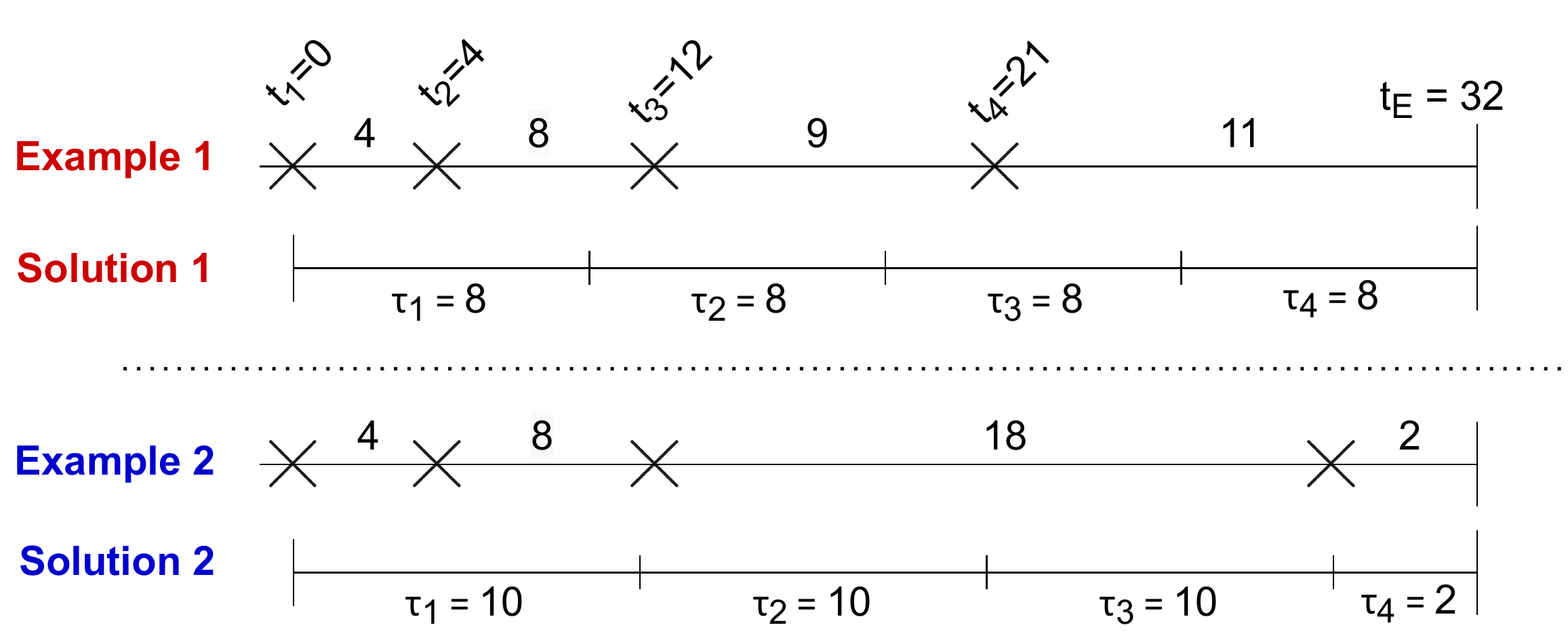}
    \caption{A demonstration of Algorithm \ref{alg:base} on two toy problems with $M=4$ and $t_R = t_E =  32$, under different $\boldsymbol{d}$ vectors.} 
    \label{fig:alg_base_toyproblem}
\end{figure}
\begin{itemize}
    \item \textit{Example 1:} When feasible in terms of the non-idling constraints, fully balancing each $\tau_i = 8$ minimizes a strictly convex, non-increasing $w(\boldsymbol{\tau})$. 
    \item \textit{Example 2:} Fully balancing with $\tau_i = 8$ for all $i$ violates the non-idling constraint for $i=4$, because of the late arrival of the fourth packet. Note that in Line 2 of Algorithm \ref{alg:base}, index $k=3$ yields the largest $\frac{1}{k}\sum\nolimits_{i=1}^{k} d_i$. Thus, the algorithm sets $\tau_i = \frac{4+8+18}{3} = 10$ for $i=1,2,3$, and proceeds to the next iteration. In the second iteration, $\tau_4 = d_4 = 2$ is set and the algorithm terminates.
\end{itemize}
Overall, Algorithm \ref{alg:base} seeks to balance $\tau_i$ as much as possible, while still satisfying the non-idling constraints. The non-idling constraints and the strict convexity of $w(\tau)$ jointly imply that the scheduler should maximally exploit the late arrivals of future packets (if applicable), but no more. Overall, a rule of thumb is to maximally balance transmission durations in order to minimize a strictly convex cost function, which we will also leverage to devise our proposed algorithms herein.

%%%%%%%%%%%%%%%%%%%%%%%%%%%%%%%%%%%%%%%%%%%%%%%%%%%%%%%%%%%%%%%%%%%%%%%%%%%%%
%%%%%%%%%%%%%%%%%%%%%%%%%%%%%%%%%%%%%%%%%%%%%%%%%%%%%%%%%%%%%%%%%%%%%%%%%%%%%

\section{Energy-Efficient Scheduling under Two-Sided Delay Constraints}
\label{sec:two_sided_delay}

\par In this section, we address the full problem with two-sided delay constraints. We first present the conditions that allow for a feasible solution to exist, as the two-sided problem is not guaranteed to have one. 

\subsection{Feasibility}
\label{subsec:feasibility}

\par In order to successfully transmit each packet, each packet needs to have a valid, non-negative region for its departure time (\emph{e.g.,} Figure \ref{fig:twosideddelay_demo}). In other words, the two-sided problem is infeasible whenever there exists a packet with index $i \in \{1,\dots,M\}$ whose the pre- and post-delay induced bounds do not yield a valid interval. Therefore, in order to ensure a feasible solution, a necessary condition is to have 
\begin{equation}
    \label{eq:nonfifo_feasibility}
    \begin{split}
        t_i + T_{\mathrm{pre},i} &\geq t_R - T_{\mathrm{post},i}, \quad \text{for all } i \in \{1,\dots,M\}.
    \end{split}
\end{equation}
Furthermore, FIFO ordering leads to an additional criterion for feasibility. Assume $j<i$. By definition, the post-delay-induced lower bound on the departure time of $j$ is $t_R - T_{\mathrm{post},j}$. If the pre-delay-induced departure deadline of $i$ ($t_i + T_{\mathrm{pre},i}$) is smaller than $t_R - T_{\mathrm{post},j}$, the mandatory waiting for packet $j$ violates the pre-delay constraint of packet $i$, which results in packet loss/drop. Therefore, assuming instantaneous processing with zero duration ($\tau_i = 0$) is infeasible, FIFO ordering tightens the feasibility criteria for each $i \in \{1,\dots,M\}$ to       
\begin{equation}
    \label{eq:feasibility_regions}
    \begin{split}
        t_i + T_{\mathrm{pre},i} &> t_R - T_{\mathrm{post},j} , \quad  \text{for all } j=1,\dots,i-1\\
        t_i + T_{\mathrm{pre},i} &\geq t_R - T_{\mathrm{post},i}.
    \end{split}
\end{equation}
Lastly, we need to have $T_{\mathrm{pre},i} > 0$ and $T_{\mathrm{post},i} > 0$ for all $i$ as the trivial validity conditions on pre- and post-delays. 
%{\color{red} Note that for the motivating MC application, a positive pre-delay requirement necessitates $t_i + T_{\mathrm{prop},i} < t_R$, which upper bounds the allowable propagation delay for validity.} 
Throughout the paper, we assume the problem is well-posed with a feasible solution. 

\subsection{Definitions}

\par Herein, we provide some definitions to classify individual packets, as well as consecutive collections of packets. We note that these definitions will be used in the sequel to both present the scheduling algorithm, as well as prove its optimality. 
\begin{definition}[\emph{Regular end packet}]
    \label{def:regular}
    \emph{Packet $i$ is a \textit{regular end packet} if its transmission ends precisely at the arrival of the $(i+1)$th packet in the sequence. That is, a regular end packet satisfies $\sum\nolimits_{j=1}^{i}\tau_j = t_{i+1} = \sum\nolimits_{j=1}^{i} d_j$.}
\end{definition}
\begin{definition}[\emph{Pre-critical packet}]
    \label{def:precrit}
    \emph{Packet $i$ is a \textit{pre-critical packet} if its transmission ends precisely at its pre-transmission delay constraint. That is, $\sum\nolimits_{j=1}^{i} \tau_j = t_i + T_{\mathrm{pre},i}$, and the packet $i$ is said to critically satisfy its pre-delay.}
\end{definition}
\begin{definition}[\emph{Post-critical packet}]
    \label{def:postcrit}
    \emph{Packet $i$ is a \textit{post-critical packet} if its transmission ends precisely at its post-transmission delay constraint. That is, $\sum\nolimits_{j=1}^{i} \tau_j = t_R - T_{\mathrm{post},i}$ and the packet $i$ is said to critically satisfy its post-delay.}
\end{definition}
\begin{definition}[\emph{Group}]
    \label{def:group}
    \emph{A \textit{group} is the (smallest) collection of consecutive packets $\{i_1,\dots, i_2\}$ whose transmission durations satisfy $\sum\nolimits_{j=1}^{i_1-1}\tau_j = \sum\nolimits_{j=1}^{i_1-1} d_j$ and $\sum\nolimits_{j=1}^{i_2}\tau_j = \sum\nolimits_{j=1}^{i_2} d_j$. That is, $\{\tau_{i_1},\dots,\tau_{i_2}\}$ describes the time interval $[\sum\nolimits_{j=1}^{i_1-1}\tau_j,\sum\nolimits_{j=1}^{i_2}\tau_j]$ that begins and ends with a packet arrival.}
\end{definition}
Referring to Figure \ref{fig:alg_base_toyproblem}, Example 2 has two groups, $\{\tau_1,\tau_2,\tau_3\}$ and $\{\tau_4\}$, where the first group's interval ends with the arrival of the fourth packet. Example 1 has only a single group $\{\tau_1,\tau_2,\tau_3,\tau_4\}$.
\begin{definition}[\emph{Subgroup}]
    \emph{A \textit{subgroup} is a collection of consecutive packets within a group that have equal transmission durations.}
\end{definition}
\begin{definition}[\emph{Regular subgroup}]
    \emph{A \textit{regular subgroup} is a subgroup that ends with a regular end packet (\emph{i.e.,} leaves an empty buffer for the next subgroup).}
\end{definition}
\begin{definition}[\emph{Critical subgroups}]
    \emph{A \textit{pre-critical (post-critical) subgroup} is a subgroup that ends with a pre-critical (post-critical) packet.}
\end{definition}
\begin{definition}[\emph{Type-R group}]
    \emph{A \textit{regular group (type-R group)} is a group in which no pre- or post-critical packets are present.} 
\end{definition}
\begin{definition}[\emph{Type-H group}]
    \label{def:het_group}
    \emph{A \textit{heterogeneous group (type-H group)} is a group that contains at least one pre- or post-critical packet.}
\end{definition}
Note that by definition, the last subgroup of any group is a regular subgroup. Furthermore, if a group contains a single subgroup, then the group is regular (\emph{i.e.,} does not contain any delay-critical packets). We refer the reader to Figure \ref{fig:twosided_delay_example} and its discussion in Subsection \ref{subsec:twosided_energy} for exemplification on these definitions.

%%%%%%%%%%%%%%%%%%%%%%%%%%%%%%%%%%%%%%%%%%%

\subsection{Optimal Offline Scheduling under Two-Sided Constraints}
\label{subsec:twosided_energy}

\par Herein, we present an algorithm that computes the optimal offline schedule that minimizes a convex energy cost under two-sided delay constraints and solves the problem described by \eqref{eq:optimization_bothdelay}. Prior to providing the algorithm, we note for a certain packet $i$, the inter-arrival time $d_i$ may be longer than the pre-transmission delay of the packet (\textit{i.e.,} $d_i > T_{\mathrm{pre},i}$), in which case a pre-transmission delay-induced idling has to occur. We note that for a well-posed problem with a feasible solution, very similar to \cite[Proposition 3.1]{chen2008_predelay}, a case with $d_i > T_{\mathrm{pre},i}$ for some $i$ would generate two completely separated scheduling intervals (for packets $\{1,\dots,i\}$ and $\{i+1,\dots,M\}$, corresponding to time intervals $[0,t_i + T_{\mathrm{pre},i}]$ and $[t_{i+1},t_E]$), which essentially yields two independent sub-problems that do satisfy $d_i \leq T_{\mathrm{pre},i}$ for all $i$.\footnote{Note that if there exists multiple such packets, this argument can be recursively generalized.} Since solving these two sub-problems separately is equivalent to finding the optimal $\boldsymbol{\tau}$ of the overall system, we assume $d_i \leq T_{\mathrm{pre},i}$ for all $i$ throughout the rest of the paper. 

\begin{algorithm}[!t]
	\fontsize{10}{10}\selectfont
	\begin{algorithmic}[1]
			\caption{Optimal (energy-minimizing) offline scheduling for two-sided transmission delays.}
			\label{alg:general_twosided}
			\renewcommand{\algorithmicrequire}{\textbf{Inputs:}}
			\renewcommand{\algorithmicensure}{\textbf{Output:}}
			\REQUIRE $M$, $\{T_{\mathrm{post},1}, \dots , T_{\mathrm{post},M} \}$, $\{T_{\mathrm{pre},1}, \dots , T_{\mathrm{pre},M} \}$, $\boldsymbol{t}$, $t_R$ \\
			Set pre-delay-induced departure deadlines $\nu_i = t_i + T_{\mathrm{pre},i}$, for  $i=1,\dots,M$\\
			Set end time $t_E = t_M + T_{\mathrm{pre},M}$ \mcg{and initialize $t_{R,o} = t_R$}\\
			\dotfill \\
			For the first packet: \\
			\STATE $\tau_{[1]} = \max \{d_1 , \mcg{t_{R,o}} - T_{\mathrm{post},1}\}$, $\quad n_{[1]} \doteq 1$ \\
            \STATE $\tau_{[2]} = \min \big\{T_{\mathrm{pre},1}, \max \{\frac{d_1+d_2}{2}, \frac{\mcg{t_{R,o}} - T_{\mathrm{post},2}}{2} \} \big\}$ \\
            $n_{[2]} = \argmin \left\{T_{\mathrm{pre},1}, \max \{\frac{d_1+d_2}{2}, \frac{\mcg{t_{R,o}} - T_{\mathrm{post},2}}{2} \} \right\}$ \\
            \STATE $A_{[k]} = \Big\{T_{\mathrm{pre},1}, \frac{d_1+T_{\mathrm{pre},2}}{2} \dots , \frac{T_{\mathrm{pre},k-1} + \sum_{i=1}^{k-2} d_i}{k-1}, $ \\
                \hspace{3.8cm} $\max \{\frac{\sum_{i=1}^{k} d_i}{k},  \frac{\mcg{t_{R,o}} - T_{\mathrm{post},k}}{k} \} \Big\} $ \\
            $\tau_{[k]} = \min A_{[k]}$ \\
            $n_{[k]} = \argmin A_{[k]}$ \\
            \STATE $i^* = \argmax \limits_{i} \tau_{[i]}$  \\ %\hspace{0.15cm} In case of multiple such $i$, pick the largest \\
            $\tau_1 \leftarrow \tau_{[i^*]}$ \\
            $\tau_1 = \dots = \tau_{n_{[i^*]}} \leftarrow \tau_1$ \\
            \dotfill \\
            For the next iteration(s): \\
            \IF{$n_{[i^*]} = i^*$ and $\sum_{i=1}^{i^*} d_i \geq \mcg{t_{R,o}} - T_{\mathrm{post},i^*}$}
                \STATE \%\%The last packet in the allocated batch is regular. \\
                \STATE Move to packet $n_{[i^*]} + 1 = i^* + 1$ \mcg{with $\mcg{t_{R,o}} \leftarrow t_R - t_{i^* + 1}$}, and proceed the same as Lines $1$-$4$.
            \ELSIF{$n_{[i^*]} = i^*$ and $\sum_{i=1}^{i^*} d_i < \mcg{t_{R,o}} - T_{\mathrm{post},i^*}$} 
                \STATE \%\%The last packet in the allocated batch is post-critical.
                \FOR{$i = i^*+1$ to $M$}
                    \STATE Set $t_i = \max \{ t_R - T_{\mathrm{post},i^*} ,t_i \}$ \\
                    \IF{$t_i = t_R - T_{\mathrm{post},i^*}$}
                        \STATE $T_{\mathrm{pre},i} \leftarrow \nu_i - (t_R - T_{\mathrm{post},i^*})$ 
                    \ENDIF
                \ENDFOR
                \STATE Update $d_i = t_{i+1} - t_i$ for $i=i^*+1,\dots,M$ \\ 
                \STATE Move to packet $n_{[i^*]} + 1 = i^* + 1$ \mcg{with $\mcg{t_{R,o}} \leftarrow T_{\mathrm{post},i^*}$}, and proceed the same as Lines $1$-$4$.
            \ELSIF{$n_{[i^*]} \neq i^*$}
                \STATE \%\%The last packet in the allocated batch is pre-critical. 
                \FOR{$i = n_{i^*}+1$ to $M$}
                    \STATE Set $t_i = \max \{ \nu_{n_{i^*}} ,t_i \}$ \\
                    \IF{$t_i = t_{n_{i^*}} + T_{\mathrm{pre},n_{i^*}}$}
                        \STATE $T_{\mathrm{pre},i} = \nu_i - \nu_{n_{i^*}}$ 
                    \ENDIF
                \ENDFOR
                \STATE Update $d_i = t_{i+1} - t_i$ for $i= n_{i^*+1},\dots,M$ \\ 
                \STATE Move to packet $n_{[i^*]} + 1$  \mcg{with $\mcg{t_{R,o}} \leftarrow t_R - \nu_{n_{i^*}}$}, and proceed the same as Lines $1$-$4$.
            \ENDIF
 			\STATE \textbf{return} $\boldsymbol{\tau}$ 
	\end{algorithmic}
\end{algorithm} 

\par Algorithm \ref{alg:general_twosided} presents the proposed optimal offline scheduling algorithm for the two-sided problem defined by \eqref{eq:optimization_bothdelay}. Similar to the single deadline scenario in Section \ref{sec:prelims}, Algorithm \ref{alg:general_twosided} also seeks to balance the transmission durations as much as possible. However, the natural differences in the two-sided delay scenario are the strict limits induced by the pre- and post-transmission delay constraints. \mcg{We first note that the numerators of every candidate term in $A_{[k]}$ (for definition, see algorithm body) correspond to the pre-delay-induced departure deadlines of each packets up to $k$ (except the last). Thus, the $\min$ operation executed when computing $\tau_{[k]}$ selects the most restrictive pre-delay constraint (say $i$th) and allocates equally for the first $i$ packets. Using this design in every iteration circumvents any pre-delay violations. The last term in $A_{[k]}$ is a $\max$ operation among the $(k+1)$th packet's arrival time and the $k$th packet's post-delay constraint, and ensures no idle time and post-delay violation at packet $k$, in case no pre-delay is further constraining the system (see discussion above). This, alongside the overall $\argmax$ operation in Line 4, together avoid post-delay constraint violations, implying that Algorithm \ref{alg:general_twosided} provides an algorithm that yields a valid solution.}

\begin{figure*}[!t]
    \centering
    \includegraphics[width = 0.95\textwidth]{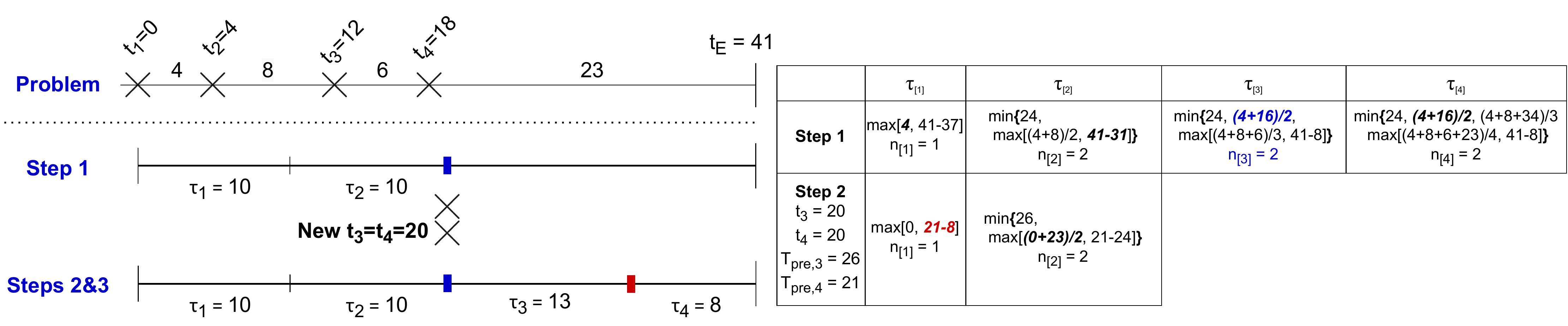}
    \caption{\mcg{A demonstration of Algorithm \ref{alg:general_twosided}, with $M=4$ packets. The pre- and post-delay constraints $\boldsymbol{T}_{\mathrm{pre}} = [24,16,34,23]$ and $\boldsymbol{T}_{\mathrm{post}} = [37,31,8,24]$, and the reference time $t_R = 41$. The table on the right hand side demonstrates the algorithm's operations in a step by step manner. Blue/red color represent pre-/post-criticality of the marked packet, respectively. }}
    \label{fig:twosided_delay_example}
\end{figure*}

\par Herein, we present a toy problem in Figure \ref{fig:twosided_delay_example} to demonstrate the operating principles of the algorithm\footnote{Unlike the single delay scenario, $t_E < t_R$ is possible under two-sided deadlines, in cases where $T_{\mathrm{pre},M} < t_R - t_M$.}. We first note that although equalizing all $\tau_i$ is optimal to minimize a convex cost, it is not feasible due to multiple pre-/post-delay constraints. In particular, the pre-delay constraint of the second packet strictly upper bounds the departure of the packet by $\tau_1 + \tau_2 \leq t_2 + T_{\mathrm{pre},2} = 20$. Allocating these two packets equally within said interval is indeed valid for both their pre-/post-delays, thus the algorithm allocates $\tau_1 = \tau_2 = 10$ in its first iteration. Note that in the meantime, packets $3$ and $4$ have arrived, but are waiting in queue for processing due to FIFO. At time $t=20$, the allocation of these packets begins. Due to convexity of $w(\tau)$, the remaining time of $21$ units is best allocated by equalizing $\tau_3$ and $\tau_4$, however this is infeasible due to the late post-delay of packet $3$. To satisfy $T_{\mathrm{post},3} = 10$, the algorithm allocates $\tau_3 = 13$, and the remaining duration is left to the last packet $\tau_4 = 8$. \mcg{The steps taken by the algorithm are also presented on Figure \ref{fig:twosided_delay_example} for demonstrative purposes.} Overall, the resultant $\boldsymbol{\tau}$ for this toy example consists of a single type-H group, with three subgroups: The first subgroup $\{\tau_1,\tau_2\}$ is pre-critical, the second subgroup $\{\tau_3\}$ is post-critical, and the last one $\{\tau_3\}$ is regular. Furthermore, we observe that in cases where pre- or post-delays are constraining fully balancing the durations, Algorithm \ref{alg:general_twosided} allocates the durations critically. That is, if the constraining deadline is a pre-delay of packet $i$, then $\sum_{j=1}^{i} \tau_j = t_i + T_{\mathrm{pre},i}$, or if it is a post-delay of packet $i'$, then $\sum_{j=1}^{i'} \tau_j = t_R - T_{\mathrm{post},i'}$. 

\subsection{Optimality of Algorithm \ref{alg:general_twosided}}

\par Prior to proving the output of Algorithm \ref{alg:general_twosided} is the optimal one, we first provide Lemma \ref{lem:optimality_conds_subgroup}. 
\begin{lemma}
    \label{lem:optimality_conds_subgroup}
    For an optimal allocation $\boldsymbol{\tau}$ for the two-sided delay case, we have the following relations:
    \begin{enumerate}
        \item A regular subgroup has a transmission duration that is {\bf greater than or equal to} the subsequent subgroup. 
        \item A post-critical subgroup has a transmission duration that is {\bf greater than or equal to} the subsequent subgroup. 
        \item A pre-critical subgroup has a transmission duration that is {\bf less than or equal to} the subsequent subgroup.
    \end{enumerate}
\end{lemma}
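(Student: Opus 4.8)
The plan is to prove all three statements by a single local exchange argument that exploits the strict convexity of $w$. For any two consecutive subgroups, let the first have common transmission duration $x$ and the second common duration $y$, and let $D^\star = \sum_{j=1}^{i}\tau_j$ be the departure time of the boundary packet $i$, i.e. the last packet of the first subgroup. I would consider the minimal perturbation that transfers a small duration $\epsilon$ across the boundary: set $\tau_i \leftarrow \tau_i + \epsilon$ and $\tau_{i+1} \leftarrow \tau_{i+1} - \epsilon$. The crucial observation is that this perturbation changes only the single departure time $D^\star$, which becomes $D^\star + \epsilon$, while leaving every other $D_k$ unchanged; hence its feasibility is governed solely by the constraints acting on packet $i$ together with positivity of the two affected durations. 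The induced first-order change in cost is $\epsilon\big(w'(x) - w'(y)\big)$, and since strict convexity makes $w'$ strictly increasing, the sign of $w'(x)-w'(y)$ equals the sign of $x-y$.

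Next I would split into three cases according to the type of the boundary packet, each fixing which sign of $\epsilon$ is admissible. For a \emph{regular} boundary (Statement 1) the non-idling constraint is tight, $D^\star=\sum_{j=1}^{i}d_j$, so $D^\star$ cannot be decreased but can be increased; the admissible direction is $\epsilon>0$, and optimality forces $w'(x)-w'(y)\geq 0$, i.e. $x\geq y$. For a \emph{post-critical} boundary (Statement 2) the post-delay is tight, $D^\star=t_R-T_{\mathrm{post},i}$, again a lower bound, so only $\epsilon>0$ is admissible and the same conclusion $x\geq y$ follows. For a \emph{pre-critical} boundary (Statement 3) the pre-delay is tight, $D^\star=t_i+T_{\mathrm{pre},i}$, an upper bound, so $D^\star$ can only be decreased, $\epsilon<0$ is admissible, and optimality forces $w'(x)-w'(y)\leq 0$, i.e. $x\leq y$. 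In every case, were the claimed inequality to fail, the admissible perturbation would strictly lower the cost and contradict optimality.

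The main obstacle I anticipate is verifying that the admissible perturbation is genuinely feasible for small $\epsilon$, i.e. that the constraint singled out as tight at the boundary is the only one blocking that direction. Because the minimal perturbation moves only $D^\star$, I merely need packet $i$ not to be simultaneously active on a constraint of the opposite sense, for example a regular end packet that is also pre-critical so that both $D^\star=\sum_{j=1}^{i}d_j$ and $D^\star=t_i+T_{\mathrm{pre},i}$ hold. Such doubly-critical packets pin $D^\star$ to a single value; I would treat them as a degenerate boundary case, noting that the feasibility conditions of Subsection \ref{subsec:feasibility} together with the standing assumption $d_i\leq T_{\mathrm{pre},i}$ keep the opposing bounds strictly separated generically, so a strictly feasible $\epsilon$ of the required sign exists. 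I would also confirm positivity of $\tau_{i+1}-\epsilon$ (respectively $\tau_i+\epsilon$), which holds for small $\epsilon$, and observe that in the regular cross-group case increasing $D^\star$ simply merges the two groups in the perturbed schedule without introducing any infeasibility. These feasibility checks, rather than the convexity computation, are where the care is needed.
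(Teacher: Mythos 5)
Your proposal is correct and takes essentially the same approach as the paper: the paper's proof is exactly this rebalancing/exchange argument, arguing that if any of the three inequalities failed one could shift duration across the subgroup boundary (increasing the earlier subgroup and decreasing the later one, or the reverse for a pre-critical boundary) without violating the tight constraint at the boundary packet, contradicting optimality via strict convexity. Your explicit $\epsilon$-transfer with the first-order sign analysis is just a more detailed rendering of that argument, and your treatment of the degenerate doubly-critical (pinned) boundary matches the paper's footnote on coinciding delay-induced bounds.
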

\begin{proof}
    The first two items follow from Lemma \ref{lem:decreasing_tau_postdelay} and \cite[Lemma 2]{uysal2002_singledelay}.
    %, and the fact that Algorithm \ref{alg:general_twosided} computes transmission durations is collections (\emph{i.e.,} subgroup-by-subgroup). 
    Note that if they did not hold, then one could simply yield a smaller energy cost by further \emph{balancing} the transmission durations through decreasing the durations of the subsequent subgroup and increasing the durations of the subgroup of interest, without violating the delay constraints (similar to Lemma \ref{lem:decreasing_tau_postdelay}). Similarly, if the third condition did not hold, the scheduler could decrease the cost by further balancing, via increasing the durations of the subsequent subgroup and decreasing the durations of the subgroup of interest, again without violating the (pre-)delay constraints (also see \cite[Lemma 3.2]{chen2008_predelay}).\footnote{Though it is unlikely to occur in practice, the problem formulation theoretically allows for cases with coinciding pre-/post-delay induced bounds on departure times, that is $t_i + T_{\mathrm{pre},i} = t_R - T_{\mathrm{post},i}$ for some $i$. This forces the departure time of packet $i$ to a single point, essentially dividing the problem into two sub-problems for packets $\{1,\dots,i\}$ and $\{i+1,\dots,M\}$. In such cases, the pre-/post-criticality of the subgroup ending with $i$ can be determined in compliance with this lemma.}
\end{proof}
Herein we present Lemma \ref{lem:ouralg_decreasingtau}, which shows that the regular and post-critical subgroups produced by Algorithm \ref{alg:general_twosided} satisfies the comparative relations dictated by Lemma \ref{lem:optimality_conds_subgroup} (conditions 1 and 2). The proof for a pre-critical subgroup follows an identical procedure, and is omitted for conciseness.
 \begin{lemma}
     \label{lem:ouralg_decreasingtau}
     Let $\boldsymbol{\tau}$ denote the output of Algorithm \ref{alg:general_twosided}. For a pair of consecutive subgroups in $\boldsymbol{\tau}$ where the prior subgroup is regular or post-critical, let $\tau_1$ and $\tau_2$ be their transmission durations, respectively. Then, $\tau_1 \geq \tau_2$.
 \end{lemma}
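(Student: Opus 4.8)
The plan is to exploit the max-of-min structure of Algorithm~\ref{alg:general_twosided}---the candidates $\tau_{[k]}=\min A_{[k]}$ and the choice $\tau_1=\max_k\tau_{[k]}$---and argue by contradiction, reducing the claim to the maximality that defines $\tau_1$. Since each iteration re-solves the same problem over the remaining packets after re-initializing the origin, the reference $t_{R,o}$, the inter-arrivals $\boldsymbol{d}$, and the pre-delays, it suffices to take the prior subgroup to be the one produced by the current pass through Lines~1--4, with the start of that pass as the origin. Suppose it ends at packet $\ell$, so $\tau_1=\tau_{[\ell]}=\max_k\tau_{[k]}$ with $n_{[\ell]}=\ell$; this last equality is exactly the branch condition that separates a regular or post-critical subgroup from a pre-critical one, and it says that the final ($\max$) entry of $A_{[\ell]}$ equals $\min A_{[\ell]}=\tau_1$ and therefore lies at or below every (pre-delay) entry of $A_{[\ell]}$. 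Writing $P_1=\ell\,\tau_1$ for the departure time of the prior subgroup (equal to $\sum_{i\le\ell}d_i$ when regular and to $t_{R,o}-T_{\mathrm{post},\ell}$ when post-critical), the subsequent subgroup is produced by the next pass as $\tau_2=\tau'_{[q]}=\min A'_{[q]}$ at its maximizing index $q$, primes denoting second-pass quantities.

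I would assume $\tau_2>\tau_1$ and contradict $\tau_1=\max_k\tau_{[k]}$ by establishing $\tau_{[\ell+q]}=\min A_{[\ell+q]}>\tau_1$. Because $\tau_2=\min A'_{[q]}>\tau_1$, \emph{every} entry of $A'_{[q]}$ strictly exceeds $\tau_1$, and the task becomes to transfer this property to $A_{[\ell+q]}$. The bridge is that the pass transition preserves the pre-delay-induced departure deadlines $\nu_i=t_i+T_{\mathrm{pre},i}$ and the (intrinsic) post-delays, while shifting the reference by exactly $P_1$: the post-critical updates $t_i\leftarrow\max\{t_R-T_{\mathrm{post},\ell},t_i\}$ and $T_{\mathrm{pre},i}\leftarrow\nu_i-(t_R-T_{\mathrm{post},\ell})$ are designed so that each $\nu_i$ is unchanged and $t'_{R,o}=t_{R,o}-P_1$, and the regular branch makes no rewrite while again giving $t'_{R,o}=t_{R,o}-P_1$. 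Consequently, after shifting the origin forward by $P_1=\ell\tau_1$, the second-pass pre-delay entry of the $m$-th batch packet equals $(\nu_{\ell+m}-\ell\tau_1)/m$ and its final entry is assembled from $t_{R,o}-P_1$, the shifted arrivals, and the unchanged post-delays; in short, $A'_{[q]}$ embeds, after this shift, into $A_{[\ell+q]}$.

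With this identity, I would check $A_{[\ell+q]}$ block by block. The pre-delay entries for packets $1,\dots,\ell$ are $\ge\tau_1$: those with index $<\ell$ because $\tau_1$ is the smallest entry of $A_{[\ell]}$, and that with index $\ell$ from the standing assumption $d_\ell\le T_{\mathrm{pre},\ell}$ (regular case) or the feasibility bound $\nu_\ell\ge t_{R,o}-T_{\mathrm{post},\ell}$ (post-critical case). For indices $\ell+1,\dots,\ell+q-1$, a second-pass pre-delay entry exceeding $\tau_1$ reads $(\nu_{\ell+m}-\ell\tau_1)/m>\tau_1$, which rearranges to $\nu_{\ell+m}/(\ell+m)>\tau_1$, precisely the matching first-pass entry. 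Finally, the second-pass final entry exceeding $\tau_1$---whether realized by the balancing average or the post-delay term---carries over through the elementary inequality $(\ell\tau_1+x)/(\ell+q)>\tau_1$ valid for $x>q\tau_1$, so the first-pass final entry also exceeds $\tau_1$. Hence every entry of $A_{[\ell+q]}$ exceeds $\tau_1$, giving $\min A_{[\ell+q]}>\tau_1$ and the contradiction.

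I expect the main obstacle to be the post-critical branch, where the algorithm rewrites subsequent arrival times, pre-delays, and inter-arrivals before the next pass; the crux is verifying that these rewrites leave each $\nu_i$ and the post-delays invariant and only shift the reference by $P_1$, so that $A'_{[q]}$ is genuinely a shifted sub-structure of $A_{[\ell+q]}$. Once that invariance is secured, the block comparisons and the averaging inequality are routine. Two remarks close the argument: it is uniform in the type of the subsequent subgroup, since a pre-critical endpoint merely makes the minimizer of $A'_{[q]}$ a pre-delay entry already handled in the pre-delay block, whereas a regular or post-critical endpoint is handled by the final-entry case; and equalities in the $\min$ are resolved by the tie-breaking convention, the only degenerate case being an interior packet $j<\ell$ with $\nu_j/j=\tau_1$ (itself pre-critical), which is eliminated by restricting to the self-similar sub-problem beginning after $j$ and repeating the comparison, terminating after finitely many restrictions.
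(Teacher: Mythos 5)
Your proposal is correct, and it rests on the same core device as the paper's own proof: assume $\tau_2 > \tau_1$ and contradict the maximality of $\tau_1$ at Line 4 of Algorithm~\ref{alg:general_twosided}. The difference is in execution, and your version is substantially more complete. The paper's proof consists essentially of the single averaging inequality $\frac{n_1\tau_1+n_2\tau_2}{n_1+n_2} > \tau_1$, together with the assertion that this combined average is ``available'' to the Line-4 maximizer; the possibility that pre-delay entries of the candidate set interfere is relegated to a footnote (``a critical allocation for said pre-delay constraint also satisfies the same argument''). But since the Line-4 candidates are $\tau_{[k]} = \min A_{[k]}$, the contradiction genuinely requires \emph{every} entry of $A_{[\ell+q]}$ --- not merely the final averaging term --- to exceed $\tau_1$, and that is exactly what your block-by-block verification supplies: indices below $\ell$ via minimality of $\tau_1$ in $A_{[\ell]}$, index $\ell$ via the standing assumption $d_\ell \le T_{\mathrm{pre},\ell}$ (or feasibility in the post-critical case), indices above $\ell$ via the rearrangement $(\nu_{\ell+m}-\ell\tau_1)/m > \tau_1 \iff \nu_{\ell+m}/(\ell+m) > \tau_1$, and the final entry via the elementary averaging bound. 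The bridge you isolate --- that the pass transition preserves each $\nu_i$ and the post-delays while shifting $t_{R,o}$ by exactly $P_1 = \ell\tau_1$, including under the post-critical arrival/pre-delay rewrites --- is precisely the invariance the paper's footnote takes for granted; making it explicit is what turns the paper's sketch into a proof. Your closing treatment of ties (an interior packet $j<\ell$ with $\nu_j/j = \tau_1$, handled by restricting to the self-similar sub-problem after $j$) addresses a degenerate case the paper does not mention at all. In short: same skeleton, but your proposal proves the steps the paper only asserts.
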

 \begin{proof}
     Let the algorithm's output be such that the former subgroup has $n_1$ packets of $\tau_1$ duration, and the latter subgroup has $n_2$ packets with $\tau_2$ duration. Suppose we have $\tau_1 < \tau_2$. Then, the total time interval covered by both subgroups, $(n_1 \tau_1 + n_2 \tau_2)$, satisfies
     \begin{equation}
         \label{eq:lemma_dectau_dummy}
         \begin{split}
             \frac{n_1 \tau_1 + n_2\tau_2}{n_1+n_2} &= \frac{n_1}{n_1+n_2} \tau_1 + \frac{n_2}{n_1+n_2} \tau_2 \\ &> \frac{n_1}{n_1+n_2} \tau_1 + \frac{n_2}{n_1+n_2} \tau_1 =\tau_1.
         \end{split}
     \end{equation}
     Let the index of the first packet of the first subgroup have the index $(k+1)$. Equation \eqref{eq:lemma_dectau_dummy} implies that from the algorithm's perspective, allocating $\frac{n_1 \tau_1 + n_2\tau_2}{n_1+n_2}$ for packets $\{k+1,\dots,k+n_1+n_2\}$ would yield a larger value than allocating $\tau_1$ for packets $\{k+1,\dots,k+n_1\}$. Thus, given the option of allocating for packets $\{k+1,\dots,k+n_1+n_2\}$ a duration of $\frac{n_1 \tau_1 + kn_2\tau_2}{n_1+n_2}$, the maximizer at Line 4 of Algorithm \ref{alg:general_twosided} cannot choose allocating $\tau_1$ for packets $\{k+1,\dots,k+n_1\}$ (\emph{i.e.,} there is a larger value\footnote{For simplicity of presentation, this argument implicitly assumes such an increase does not violate any pre-delays. That said, if there is a pre-delay constraint that prevents this, a critical allocation for said pre-delay constraint also satisfies the same argument.}), which yields a contradiction. Thus, consecutive subgroups have to have non-increasing durations.
 \end{proof}
 
Overall, in light of Lemmas \ref{lem:optimality_conds_subgroup}-\ref{lem:ouralg_decreasingtau}, we provide the main result for the two-sided case in the theorem below.
\begin{theorem}
    \label{theo:optimality_twosided}
    Algorithm \ref{alg:general_twosided} yields an optimal $\boldsymbol{\tau}$ that minimizes any cost that is strictly convex, decreasing, and positive.
\end{theorem}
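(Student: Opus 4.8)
The plan is to leverage convexity to reduce optimality to a structural matching argument. Since each $w$ is strictly convex and the feasible set cut out by \eqref{eq:optimization_nonidling1}--\eqref{eq:optimization_postdelay} is the intersection of finitely many half-spaces with a hyperplane (hence compact and convex), the objective $\sum_i w(\tau_i)$ is strictly convex and admits a \emph{unique} minimizer $\boldsymbol{\tau}^\star$. It therefore suffices to show that the output $\boldsymbol{\tau}$ of Algorithm \ref{alg:general_twosided} coincides with $\boldsymbol{\tau}^\star$, and since the algorithm's output is independent of the particular $w$, this simultaneously establishes optimality for every admissible cost. First I would verify feasibility: the discussion following the algorithm shows that the $\min$ over the candidate set $A_{[k]}$, whose numerators are the pre-delay-induced deadlines $\nu_i$, never violates \eqref{eq:optimization_predelay}, while the term $\max\{\sum_{i=1}^{k} d_i/k,\,(t_{R,o}-T_{\mathrm{post},k})/k\}$ enforces the non-idling bound \eqref{eq:optimization_nonidling1} and the post-delay bound \eqref{eq:optimization_postdelay}; the endpoint constraint \eqref{eq:optimization_nonidling2} holds because the algorithm exhausts $[0,t_E]$. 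Thus $\boldsymbol{\tau}$ is feasible.

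Next I would characterize $\boldsymbol{\tau}^\star$ structurally. By Lemma \ref{lem:optimality_conds_subgroup}, the optimal allocation splits into subgroups of equal duration, with durations non-increasing across each regular or post-critical boundary and non-decreasing across each pre-critical boundary, and with every subgroup boundary forced to be critical (regular-end, pre-delay, or post-delay). By Lemma \ref{lem:ouralg_decreasingtau} and its pre-critical analogue, the output of Algorithm \ref{alg:general_twosided} obeys exactly the same between-subgroup monotonicity, and by construction (the $\argmax$ in Line 4 together with the criticality updates in the post- and pre-critical branches) its subgroup boundaries are critical as well. Hence $\boldsymbol{\tau}$ and $\boldsymbol{\tau}^\star$ share the same qualitative subgroup structure.

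I would then close the argument by a decoupling/induction step. The monotonicity relations of Lemma \ref{lem:optimality_conds_subgroup} imply that no feasible time can be profitably transferred across a critical boundary: shifting load across a regular or post-critical boundary would grow an already larger duration at the expense of a smaller one, worsening balance, and symmetrically for a pre-critical boundary. Consequently $\boldsymbol{\tau}^\star$ factorizes into independent subproblems, one per critical interval. This is the two-sided analogue of the group separation of \cite[Proposition 3.1]{chen2008_predelay}, now applied at post- and pre-critical boundaries rather than only at idling points, and it mirrors the reindexing of arrival times and pre-delays performed in the post-critical and pre-critical branches of Algorithm \ref{alg:general_twosided}. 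On each decoupled interval no interior constraint binds, so the subproblem reduces to a single-common-deadline instance whose unique optimum is the maximally balanced allocation produced by Algorithm \ref{alg:base} \cite{uysal2002_singledelay}, which is precisely the balanced critical value Algorithm \ref{alg:general_twosided} assigns. Inducting over the finitely many critical intervals from left to right then yields $\boldsymbol{\tau}=\boldsymbol{\tau}^\star$.

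The main obstacle I anticipate is rigorously justifying the decoupling at post-critical boundaries. At such a boundary the departure time must be pinned to exactly $t_R-T_{\mathrm{post},i^\star}$ in the optimum; establishing this requires ruling out both slack (handled by the non-idling and post-delay lower bounds) and any advantage from shifting load across the boundary (handled by the non-increasing-duration relation of Lemma \ref{lem:optimality_conds_subgroup}). Making the induction airtight also demands verifying that the constraint updates in the post-critical branch—which shift later arrivals to $t_R-T_{\mathrm{post},i^\star}$ and correspondingly shrink their pre-delays—faithfully reproduce the residual subproblem seen by $\boldsymbol{\tau}^\star$ to the right of the boundary. A cut-and-paste exchange argument, perturbing any hypothesized $\boldsymbol{\tau}^\star\neq\boldsymbol{\tau}$ toward greater balance without violating feasibility, is the tool I would use to force the contradiction.
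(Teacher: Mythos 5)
Your high-level strategy---establish a unique minimizer $\boldsymbol{\tau}^\star$ by strict convexity, then argue the algorithm's output $\boldsymbol{\tau}$ must coincide with it via structural matching and decoupling---is genuinely different from the paper's proof, which is a direct majorization argument: the paper orders both $\boldsymbol{\tau}$ and an arbitrary competitor $\boldsymbol{\tau}'$ non-increasingly and proves the contrapositive that any $\boldsymbol{\tau}'$ failing to majorize $\boldsymbol{\tau}$ must violate a non-idling, pre-delay, or post-delay constraint; Schur-convexity of $\sum_{i} w(\tau_i)$ then yields optimality for every admissible cost at once. The difference matters because your route has a genuine gap at its center. Showing that $\boldsymbol{\tau}$ and $\boldsymbol{\tau}^\star$ ``share the same qualitative subgroup structure'' (both obey the monotonicity relations of Lemma~\ref{lem:optimality_conds_subgroup} and have critical subgroup boundaries) does not imply they are equal: two allocations can satisfy all of these necessary conditions yet place their critical boundaries at different packets, at different times, and with different criticality types. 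Your decoupling-and-induction step presupposes exactly what must be proven---that the critical boundaries of $\boldsymbol{\tau}^\star$ occur where the algorithm puts them, so that the residual subproblem to the right of each boundary is the same for both. Without an argument that pins down the \emph{first} boundary of $\boldsymbol{\tau}^\star$ (for instance, ruling out that the optimum's first subgroup ends post-critically at packet $1$ when the algorithm's ends pre-critically at packet $2$), the induction cannot even start. You name this obstacle yourself and defer it to a ``cut-and-paste exchange argument,'' but that exchange argument \emph{is} the technical content of the theorem, not a finishing detail.

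The paper's majorization proof (Appendix~\ref{ap:proof_of_twosided}) is precisely the rigorous form of that missing argument, and its hardest part is one your sketch does not touch: when pre-critical subgroups are present, durations may \emph{increase} along the schedule, so the non-increasing reordering $\boldsymbol{\tau}^{\downarrow}$ interleaves packets from different portions of the timeline. The paper handles this (Case 2 and Claims 1--3) by partitioning $\{1,\dots,k_{j_c}\}$ into the packets inside and outside the top-$j_c$ set, showing every excluded ``collection'' ends with a pre-critical subgroup and is preceded by a regular or post-critical one, hence its total duration is already at its feasible maximum; only then can the partial-sum comparison with $\boldsymbol{\tau}'$ be converted into a violated non-idling or post-delay constraint. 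Some equivalent of this case analysis is indispensable to make your left-to-right induction airtight, so as written your proposal is a plausible plan rather than a proof.
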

\begin{proof}
    See Appendix \ref{ap:proof_of_twosided}.
\end{proof}
We emphasize that similar to the single deadline case (\cite{uysal2002_singledelay}) and the pre-delay only case (\cite{wanshi_multihop}), the optimal solution for the two-sided case is also independent of the particular energy function $w(\tau)$. That is, regardless of the particular function, as long as $w(\tau)$ is strictly convex and decreasing, the output of Algorithm \ref{alg:general_twosided} is the optimal schedule. 

\par A special case of the scheduling problem under two-sided delay constraints is the case wherein only post-delays are relevant, and pre-delays are $T_{\mathrm{pre},i} = t_R - t_i$ for all $i$ (thus unconstraining). For this so-called \emph{one-sided}, post-delay constrained case, we have the following lemma: 
\begin{lemma}
    \label{lem:decreasing_tau_postdelay}
    Having $\tau_i \geq \tau_{i+1}$ for all $i$ is a necessary condition for optimality in the one-sided, post-transmission delay-constrained scenario.
\end{lemma}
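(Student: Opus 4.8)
The plan is to argue by contradiction using a single local \emph{balancing} (two-point exchange) perturbation, leaning entirely on the strict convexity of $w$. Suppose the claim fails for some optimal allocation $\boldsymbol{\tau}$, i.e., there is an index $i \in \{1,\dots,M-1\}$ with $\tau_i < \tau_{i+1}$. I would perturb only these two consecutive durations, setting $\tau_i' = \tau_i + \epsilon$ and $\tau_{i+1}' = \tau_{i+1} - \epsilon$ for a sufficiently small $\epsilon > 0$, leaving every other duration fixed. The structural observation that drives the whole argument is that this perturbation leaves every partial sum $\sum_{j=1}^{k}\tau_j$ unchanged for $k \neq i$ (the $+\epsilon$ and $-\epsilon$ cancel for $k \geq i+1$, and nothing moves for $k < i$), while the single partial sum at $k=i$ strictly \emph{increases} by exactly $\epsilon$.

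Next I would check that feasibility is preserved. In the one-sided post-delay regime every active constraint is a \emph{lower} bound on a partial sum: the non-idling constraints \eqref{eq:optimization_nonidling1} and the post-delay constraints \eqref{eq:optimization_postdelay}. Since the only partial sum that changes (at $k=i$) strictly increases, all of these remain satisfied. The terminal equality \eqref{eq:optimization_nonidling2} is untouched because the total $\sum_{j=1}^{M}\tau_j$ is invariant under the swap. The remaining constraints are the (unconstraining) pre-delays \eqref{eq:optimization_predelay}, which here collapse to $\sum_{j=1}^{k}\tau_j \leq t_R$ with $t_E = t_M + T_{\mathrm{pre},M} = t_R$; since $\sum_{j=1}^{i}\tau_j = t_R - \sum_{j=i+1}^{M}\tau_j < t_R$ strictly for $i < M$, a small enough $\epsilon$ respects this upper bound. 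Thus the perturbed allocation is feasible.

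Strict convexity then furnishes the contradiction. For the fixed sum $\tau_i + \tau_{i+1}$, pushing the pair toward its midpoint strictly lowers $w(\tau_i) + w(\tau_{i+1})$: writing $g(\epsilon) = w(\tau_i + \epsilon) + w(\tau_{i+1} - \epsilon)$, strict convexity (equivalently, $w'$ strictly increasing) gives $g'(0) = w'(\tau_i) - w'(\tau_{i+1}) < 0$ since $\tau_i < \tau_{i+1}$, so $g(\epsilon) < g(0)$ for all $0 < \epsilon \leq (\tau_{i+1}-\tau_i)/2$. Choosing $\epsilon$ in this range also keeps $\tau_{i+1}-\epsilon > 0$ and stays on the cost-decreasing side of the balance. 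Because all other cost terms are unchanged, the total cost strictly decreases, contradicting optimality of $\boldsymbol{\tau}$. Hence $\tau_i \geq \tau_{i+1}$ must hold for every $i$.

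I do not anticipate a real obstacle: this is a textbook two-point exchange. The one step demanding care is the sign bookkeeping, namely confirming that the perturbation alters \emph{only} the partial sum at index $i$ and that this sum \emph{increases} — this is exactly what renders all lower-bound constraints immune to the change, and a sign slip would break the feasibility verification. The only secondary detail is choosing $\epsilon$ small enough to simultaneously respect the pre-delay upper bound and remain on the decreasing branch of $g$, which the bound $\epsilon \leq (\tau_{i+1}-\tau_i)/2$ handles cleanly.
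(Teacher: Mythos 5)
Your proof is correct and takes essentially the same route as the paper: the paper proves this lemma by pointing to Lemma~2 of \cite{uysal2002_singledelay}, whose argument is precisely your two-point balancing exchange, and your key bookkeeping step---that the swap changes only the partial sum at $k=i$, and increases it, so every lower-bound (non-idling and post-delay) constraint survives---is exactly the observation needed to carry that argument over to the one-sided post-delay setting. One cosmetic remark: you invoke $w'$, but the paper assumes only strict convexity (not differentiability); the same strict decrease $w(\tau_i+\epsilon)+w(\tau_{i+1}-\epsilon) < w(\tau_i)+w(\tau_{i+1})$ for $0<\epsilon\leq(\tau_{i+1}-\tau_i)/2$ follows directly from strict convexity, since the perturbed pair is a strictly more balanced splitting of the same sum.
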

\noindent The proof is identical to \cite[Lemma 2]{uysal2002_singledelay}. \mcg{Lemma \ref{lem:decreasing_tau_postdelay} extends the necessary condition of having non-increasing durations in the optimal $\boldsymbol{\tau}$ of the single deadline case in \cite{uysal2002_singledelay} to systems under one-sided, individual post-transmission delay constraints.} In the sequel, the lemma will motivate our design of the transmission completion time minimizing algorithm (Subsection \ref{susbec:converse_postdelay}).

\noindent \mcg{\textbf{Computational Complexity:} The computational complexity of Algorithm \ref{alg:general_twosided} greatly depends on how many iterations are needed to fully allocate all $M$ packets. At the worst case, the allocation occurs one packet at a time, and the algorithm takes $M$  iterations to complete. In its first iteration, Algorithm \ref{alg:general_twosided} compares $M$ candidate $\tau_{[k]}$ values to come up with the first allocation, which requires evaluating $k$ elements of the set $A_{[k]}$ (see definition in algorithm body). This incurs computing $1+ \dots + M = \frac{M(M+1)}{2}$ elements and is $\mathcal{O}(M^2)$. Rest of the pre-/post-delay updates in Lines 5-28 are $\mathcal{O}(M)$. Therefore, each iteration of Algorithm \ref{alg:general_twosided} is $\mathcal{O}(M^2)$, implying the overall time complexity is $\mathcal{O}(M^3)$ in the worst case. 

%\par Storing and updating every $A_{[k]}$ set incurs $\mathcal{O}(M^2)$ space complexity (similar to the paragraph above), which is the dominating term in terms of space complexity. However, we note that as $\tau_{[k]}$ and $n_{[k]}$ values are computed by $\max$ and $\min$ operations on $A_{[k]}$, each element of $A_{[k]}$ can be fed and compared one-by-one, in which case the space complexity would reduce to $\mathcal{O}(M)$.

\par Storing and updating every $A_{[k]}$ set incurs $\mathcal{O}(M^2)$ space complexity (similar to the paragraph above). As storing $\boldsymbol{d}$, $\boldsymbol{T}_{\mathrm{pre}}$, $\boldsymbol{T}_{\mathrm{post}}$, are all $\mathcal{O}(M)$, Algorithm \ref{alg:general_twosided} incurs $\mathcal{O}(M^2)$ space complexity. However, we note that as $\tau_{[k]}$ and $n_{[k]}$ values are computed by $\max$ and $\min$ operations on $A_{[k]}$, each element of $A_{[k]}$ can be fed one-by-one for comparison, in which case the space complexity reduces to $\mathcal{O}(M)$.
}

\section{Energy Constrained Completion Time Minimization}
\label{sec:completiontime}

\par Section \ref{sec:two_sided_delay} introduces strategies that minimize the incurred cost (\emph{e.g.,} energy consumption) while subjecting each packet to pre- and post-transmission delay constraints. However, the ``dual" problem may also be of interest, where the scheduler seeks to minimize the total transmission completion time $T_c=\sum_{i=1}^{M} \tau_i$, with a fixed energy budget $w_{\mathrm{max}}$. 
%\mcg{We note that in an MC application wherein the transmitted $M$ molecules constitute a bit-`1' of an OOK modulation, minimizing the total transmission completion time of these $M$ molecules relates to effectively minimizing the pulse width of an MC bit transmission. This is of particular interest in MC applications as decreasing the transmit pulse width can potentially alleviate ISI by shortening the received signal spread \cite{pulseshaping_balaji,pulse_shape_PPM} (\emph{e.g.,} which is useful if the next bit is a `0')}. 
Herein, we seek to find the optimal offline allocation that yields said shortest transmission completion time, while transmitting all $M$ packets until a reference time $t_R$, as well as having all $M$ packets received and active (\emph{i.e.,} not expired) at $t_R$ (similar to Sections \ref{sec:system_model}-\ref{sec:two_sided_delay}). In addition to an energy budget constraint $w_{\mathrm{max}}$, this problem setup again yields \emph{pre-delay} and \emph{post-delay constraints} as well. The overall problem can be formulated as
\begin{equation}
	\begin{aligned}
		\min_{\boldsymbol{\tau}} \quad & T_c = \sum\nolimits_{i=1}^{M} \tau_i\\	
		\text{s.t.} \quad & \sum\nolimits_{i=1}^{k} \tau_{i} \geq \sum\nolimits_{i=1}^{k} d_{i}, \quad k \in \{1, \dots, M-1\},  \\ 
		& w(\boldsymbol{\tau}) = \sum\nolimits_{i=1}^{M} w(\tau_i) \leq w_{\mathrm{max}}, \\
		& \sum\nolimits_{i=1}^{k} \tau_{i} \geq t_R - T_{\mathrm{post},k}, \quad k \in \{1, \dots, M\}, \\
		& \sum\nolimits_{i=1}^{k} \tau_{i}-\sum\nolimits_{i=1}^{k-1} d_{i} \leq T_{\mathrm{pre},k}, \hspace{0.1cm} k \in \{1, \ldots, M \}.
	\end{aligned}
    \label{eq:converse_problem}
\end{equation}

\subsection{Completion Time Minimization under Post-Delays}
\label{susbec:converse_postdelay}

\par Though the ultimate goal is addressing the two-sided problem (presented in the next subsection), we note that pre- and post-delays need to be treated differently for the completion time minimization framework. To provide insight into the role of post-delays in the completion time problem, as well as build the results in a cumulative manner, we first address the one-sided, post-delay only case. Recall that the post-delay only case corresponds to having 
%$T^{(A)}_{\mathrm{exp},i} = \infty$ and $T_{\mathrm{prop},i} = 0$, which yields 
$T_{\mathrm{pre},i} = t_R-t_i$ for all $i$. Furthermore, note that depending on the relationship between $t_R$, $w(\tau)$, $w_{\max}$, $\boldsymbol{d}$, and $\boldsymbol{T}_{\mathrm{post}}$, there can be scenarios in which $T_c \leq t_R$ is not feasible (\textit{e.g.,} not enough energy to complete transmissions until $t_R$). Herein, we assume the problem is well-posed and a feasible solution exists.

\par To motivate algorithm design, we provide the following lemma that provides necessary conditions for the optimal duration vector $\boldsymbol{\tau}$ that achieves the minimum completion time under two-sided delays (denoted by $T^*_c$ throughout the paper). Note that if the strict post-delay lower bound of packet $M$, $T_c = t_R - T_{\mathrm{post},M}$, is achievable with the available budget $w_{\mathrm{max}}$, all $\boldsymbol{\tau}$ that satisfy $w(\boldsymbol{\tau}) \leq w_{\mathrm{max}}$ and $\sum_{i=1}^{M} \tau_i = t_R - T_{\mathrm{post},M}$ yield the smallest possible completion time, making the smallest completion time $T^*_c = t_R - T_{\mathrm{post},M}$. When presenting Lemmas \ref{lem:converse_needs_energy_optimal}-\ref{lem:converse_converse}, we will consider the more general case where $T^*_c > t_R - T_{\mathrm{post},M}$.

\begin{lemma}[Necessary conditions for completion time optimality]
    \label{lem:converse_needs_energy_optimal}
    Let $T^*_c > t_R - T_{\mathrm{post},M}$ be the minimum completion time, achieved by some duration vector $\boldsymbol{\tau}$. Then, the following are true:
    \begin{enumerate}
        \item $w(\boldsymbol{\tau}) = w_{\mathrm{max}}$.
        \item The vector $\boldsymbol{\tau}$ is the optimal vector for the energy minimization problem with end time $t_E = T^*_c$, and its achieved minimum energy is $w_{\mathrm{max}}$.
    \end{enumerate}
\end{lemma}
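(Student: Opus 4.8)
The plan is to prove both items via local perturbation arguments that exploit the standing hypothesis $T^*_c > t_R - T_{\mathrm{post},M}$. The key consequence of this hypothesis is that the final ($k=M$) post-delay constraint $\sum_{i=1}^{M}\tau_i \geq t_R - T_{\mathrm{post},M}$ is \emph{slack} at the optimum, so that the total completion time has room to shrink if only the energy budget permits.

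For Item 1, I would argue by contradiction. Suppose $w(\boldsymbol{\tau}) < w_{\mathrm{max}}$. Since the cost blows up as $\tau \to 0^{+}$ (cf.\ the Shannon-inverse motivation), a finite budget keeps every duration, in particular $\tau_M$, strictly positive. Now observe that the only lower bound on $\sum_{i=1}^{M}\tau_i$ that involves $\tau_M$ is the slack $k=M$ post-delay constraint: the non-idling constraints \eqref{eq:optimization_nonidling1} only run to $k=M-1$, the post-delay constraints for $k<M$ depend only on $\tau_1,\dots,\tau_{M-1}$, and every pre-delay is an upper bound that is only loosened by shrinking $\tau_M$. Hence I can decrease $\tau_M$ by a sufficiently small $\epsilon>0$; by continuity of $w(\cdot)$ the perturbed vector still satisfies $w \leq w_{\mathrm{max}}$ and all constraints in \eqref{eq:converse_problem}, yet it attains completion time $T^*_c - \epsilon < T^*_c$, contradicting the minimality of $T^*_c$. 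Therefore $w(\boldsymbol{\tau}) = w_{\mathrm{max}}$.

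For Item 2, I would first record feasibility: because $\boldsymbol{\tau}$ satisfies $\sum_{i=1}^{M}\tau_i = T^*_c$ together with the non-idling and post-delay constraints that are \emph{common} to both formulations, it is feasible for the energy-minimization problem \eqref{eq:optimization_bothdelay} with $t_E = T^*_c$; consequently the optimal energy $w^{*}$ of that problem obeys $w^{*} \leq w(\boldsymbol{\tau}) = w_{\mathrm{max}}$. I would then establish $w^{*} = w_{\mathrm{max}}$ by a second contradiction: if some feasible $\boldsymbol{\tau}^{\dagger}$ attained $w(\boldsymbol{\tau}^{\dagger}) = w^{*} < w_{\mathrm{max}}$ with $\sum_{i=1}^{M}\tau_i^{\dagger} = T^*_c$, then $\boldsymbol{\tau}^{\dagger}$ would also be feasible for the completion-time problem \eqref{eq:converse_problem} but carrying strict energy slack, so the identical decrease-$\tau_M^{\dagger}$ perturbation would produce a completion time below $T^*_c$, again a contradiction. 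Thus $w(\boldsymbol{\tau}) = w_{\mathrm{max}} = w^{*}$, so $\boldsymbol{\tau}$ attains the minimum energy and is an optimal solution of the energy-minimization problem with $t_E = T^*_c$; strict convexity of $w(\cdot)$ over the convex feasible set in fact renders this minimizer unique.

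The main obstacle I anticipate is rigorously certifying that the perturbation is admissible, namely that $\tau_M$ is strictly positive and can be decreased without activating any binding lower-bound constraint. This reduces entirely to confirming that the $k=M$ post-delay constraint furnishes the \emph{only} relevant lower bound on $\sum_{i=1}^{M}\tau_i$ and that the hypothesis $T^*_c > t_R - T_{\mathrm{post},M}$ keeps it slack; continuity of $w(\cdot)$ then disposes of the energy budget for small $\epsilon$. The remaining work is routine bookkeeping verifying that lowering the last duration leaves all earlier partial sums, and hence every $k<M$ constraint, untouched.
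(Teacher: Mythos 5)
Your proof is correct and follows essentially the same route as the paper's: Item 1 by the contradiction that strict energy slack permits shrinking $\tau_M$ (admissible precisely because the hypothesis $T^*_c > t_R - T_{\mathrm{post},M}$ keeps the only relevant lower bound slack), and Item 2 by reducing a hypothetically cheaper equal-completion-time vector back to the Item 1 contradiction. The only difference is presentational: you spell out the constraint-by-constraint bookkeeping (and the $\tau_M>0$ edge case) that the paper leaves implicit.
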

\begin{proof}
    We show the statements in their order of presentation:
    \begin{enumerate}
        \item Suppose $w(\boldsymbol{\tau}) < w_{\mathrm{max}}$. Then, $\boldsymbol{\tau}$ leaves unused energy that could have been used to further decrease $\tau_M$, thereby decreasing $T_c$. Hence, $T^*_c < \sum_{i=1}^{M} \tau_i$, yielding a contradiction. Thus, $w(\boldsymbol{\tau}) = w_{\mathrm{max}}$ is true.
        \item Suppose otherwise. Then, there exists a feasible $\boldsymbol{\tau}' \neq \boldsymbol{\tau}$ such that $\sum_{i=1}^{M} \tau'_i = \sum_{i=1}^{M} \tau_i = T^*_c$, that achieves $w(\boldsymbol{\tau}') < w(\boldsymbol{\tau})$. Thus, $w(\boldsymbol{\tau}') < w_{\mathrm{max}}$, meaning $T^*_c$ cannot be the minimum completion time due to Condition 1, yielding a contradiction.
    \end{enumerate}
\end{proof}

\begin{corollary}
    \label{cor:converse_post_decreasing}
    For any $T^*_c > t_R - T_{\mathrm{post},M}$, the optimal $\boldsymbol{\tau}$ that achieves $T^*_c = \sum_{i=1}^{M} \tau_i$ for the post-delay only case satisfies $\tau_i \geq \tau_{i+1}$ for all $i$.
\end{corollary}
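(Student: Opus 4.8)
The plan is to obtain this monotonicity as an immediate consequence of the reduction established in Lemma \ref{lem:converse_needs_energy_optimal}, combined with the necessary condition proved in Lemma \ref{lem:decreasing_tau_postdelay}. Concretely, I would fix an optimal duration vector $\boldsymbol{\tau}$ attaining the minimum completion time $T^*_c$, under the standing assumption $T^*_c > t_R - T_{\mathrm{post},M}$. The key observation is that the second item of Lemma \ref{lem:converse_needs_energy_optimal} already characterizes this vector completely: it states that $\boldsymbol{\tau}$ is precisely the minimizer of the energy-minimization problem whose end time is set to $t_E = T^*_c$, with achieved optimal energy equal to $w_{\mathrm{max}}$. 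This converts the completion-time question into a statement purely about an energy-minimizing allocation over the fixed horizon $T^*_c$.

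With this reduction in hand, I would invoke Lemma \ref{lem:decreasing_tau_postdelay}, which asserts that any energy-optimal allocation in the one-sided, post-delay-constrained scenario must satisfy $\tau_i \geq \tau_{i+1}$ for all $i$. Since we are in the post-delay only case (pre-delays rendered inactive through $T_{\mathrm{pre},i} = t_R - t_i$, which with $t_1 \doteq 0$ reduces every pre-delay constraint to the slack bound $\sum_{i=1}^{k}\tau_i \leq t_R$, automatically met because $T^*_c \leq t_R$), the energy-minimizing vector over horizon $T^*_c$ inherits this non-increasing structure. As that vector coincides with the completion-time optimal $\boldsymbol{\tau}$ by the preceding paragraph, the claim $\tau_i \geq \tau_{i+1}$ follows at once.

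The only point requiring care, and the step I would verify explicitly, is that Lemma \ref{lem:decreasing_tau_postdelay} applies with the horizon taken to be $T^*_c$ rather than $t_R$. Because the necessary condition there is obtained by the same local exchange (balancing) argument as \cite[Lemma 2]{uysal2002_singledelay}, which only uses strict convexity of $w$ together with the post-delay lower bounds and is agnostic to the particular value of the end time, this verification is routine: the argument transfers verbatim with $t_E = T^*_c$. Strict convexity additionally guarantees that the energy minimizer is unique, so there is no ambiguity in speaking of \emph{the} optimal $\boldsymbol{\tau}$. I do not anticipate any genuine obstacle, as the corollary is essentially the composition of two results already in hand; the substantive content resides in Lemma \ref{lem:converse_needs_energy_optimal} and Lemma \ref{lem:decreasing_tau_postdelay}, not in the corollary itself.
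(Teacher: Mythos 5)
Your proposal is correct and follows exactly the paper's own route: the paper proves this corollary by composing Lemma \ref{lem:converse_needs_energy_optimal} (completion-time optimality implies energy optimality over the horizon $T^*_c$) with Lemma \ref{lem:decreasing_tau_postdelay} (energy optimality under post-delays forces non-increasing durations). Your additional verification that the balancing argument behind Lemma \ref{lem:decreasing_tau_postdelay} is insensitive to the particular end time is a reasonable point of care, but it does not change the substance of the argument, which the paper states simply as following directly from the two lemmas.
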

\begin{proof}
    Follows directly from Lemmas \ref{lem:decreasing_tau_postdelay} and \ref{lem:converse_needs_energy_optimal}.
\end{proof}

\begin{algorithm}[!t] 
	\fontsize{10}{10}\selectfont
	\begin{algorithmic}[1]
			\caption{Offline scheduling algorithm for transmission completion time minimization under post-transmission delay constraints.}
			\label{alg:converse_post}
			\renewcommand{\algorithmicrequire}{\textbf{Inputs:}}
			\renewcommand{\algorithmicensure}{\textbf{Output:}}
			\REQUIRE $M$, $\boldsymbol{d}$, $w(\tau)$, $t_R$, $\boldsymbol{T}_{\mathrm{post}}$, and for all packets $T_{\mathrm{pre},i} = t_R - t_i$  \\

			\STATE $i^* = \argmax\limits_{i} \quad t_R - T_{\mathrm{post},i} $ \\
			
			\IF{$t_R - T_{\mathrm{post},i^*} \leq t_M$}
			    \STATE \%\% \textit{Case 1: All post-delay deadlines occur before the last packet's arrival.} \\
    			\STATE Run Algorithm \ref{alg:general_twosided} for packets $1,\dots,M-1$ with inter-arrival times $\{d_1,\dots,d_{M-1}\}$. Obtain $\{\tau^{(b)}_1,\dots,\tau^{(b)}_{M-1}\}$\\
        		$n_g$: the number of subgroups. \\
        		$\boldsymbol{n}^{(m)}_{\hspace{0.1cm} n_g \times 1}$: vector that holds the number of packets in each subgroup. \\
        		\STATE $k = \sum_{i=1}^{n_g} n^{(m)}_{i} = M-1$
        		
        		\WHILE{1}
        		    \STATE Set $\tau_i = \tau^{(b)}_i$, for $i=1,\dots,k$ \\ 
        		    \STATE Compute remaining energy $w_r = w_{\mathrm{max}} - \sum_{i=1}^{k} w(\tau^{(b)}_i)$ \\
        		    \STATE Assign equal duration for the remaining packets: $\tau_{k+1} = \dots = \tau_M = w^{-1}(\frac{w_r}{M-k})$ \\
        		    
        		    \IF{$\tau_i \geq \tau_{i+1}$ for all $i$}
        		        \STATE \textbf{break}
        	        \ELSE
        	            \STATE $n_g = n_g -1$\\
        	            \STATE $k = \sum_{i=1}^{n_g} n^{(m)}_{i} (n_g)$
        		    \ENDIF
        		
        		\ENDWHILE
			
			\ELSE
			    \STATE \%\% \textit{Case 2: There exists a post-delay deadline after the last packet's arrival.} \\
			    \IF{$i^* \neq M$}
			        \STATE \%\% \textit{Case 2a: Although occurring after $t_M$, the latest post delay is one of the earlier packets.}
			        \STATE Run Algorithm \ref{alg:general_twosided} for packets $1,\dots,i^*$, with end time $t_R-T_{\mathrm{post},i^*}$. Obtain $\tau^{(b)}_1,\dots,\tau^{(b)}_{i^*}$ \\
			        \STATE Proceed as in Lines 6-16, with initializing $k=i^*$ and $n_g$ as the subgroup index of packet $i^*$.
		        \ELSIF{$i^* = M$}
			        \STATE Run Algorithm \ref{alg:general_twosided} for packets $1,\dots,M$, with end time $t_R-T_{\mathrm{post},M}$. Obtain $\tau_1,\dots,\tau_{M}$ \\
			        \IF{$w(\boldsymbol{\tau}) \leq w_{\mathrm{max}}$}
			            \STATE Minimum possible completion time $t_R-T_{\mathrm{post},M}$ is achievable. Algorithm done. \textit{\%\% Case 2b-i}\\
			        \ELSIF{$w(\boldsymbol{\tau}) > w_{\mathrm{max}}$}
			            \STATE \textit{\%\% Case 2b-ii}
			            \IF{$\forall i \in \{1,\dots,M-1\}$, we have $t_R - T_{\mathrm{post},i} < t_M$}
			                \STATE Proceed as in Case 1.
			            \ELSIF{$\exists i \in \{1,\dots,M-1\}$, such that $t_R - T_{\mathrm{post},i} \geq t_M$}
    			            \STATE Proceed as in Case 2a, with $i^*$ equal to the index of the second largest $t_R - T_{\mathrm{post},i}$.
			            \ENDIF
			        \ENDIF
			    \ENDIF
			\ENDIF \\
			\textbf{return} $\boldsymbol{\tau}$
	\end{algorithmic}
\end{algorithm} 

\par Motivated by Lemma \ref{lem:converse_needs_energy_optimal} and Corollary \ref{cor:converse_post_decreasing}, we present Algorithm \ref{alg:converse_post} that achieves the minimum transmission completion time. Note that according to the post-delay-induced departure time lower bounds, the approach taken by Algorithm \ref{alg:converse_post} changes, hence the algorithm is presented in a case-by-case manner. Herein, to describe the operations of Algorithm \ref{alg:converse_post}, we address Case 1, where all post-delay induced lower limits on departure times reside within the interval $[0,t_M)$. We note that the rest of the cases (\emph{e.g.,} Case 2 and its subcases) follow a similar logic, with the exception of accounting for post-delays occurring after $t_M$. 

\par For Case 1, Algorithm \ref{alg:converse_post} exploits the fact that the completion time $T_c$ cannot be made smaller than the arrival time of the last packet due to causality. Thus, if one were to allocate the first $M-1$ packets between the interval $[0,t_M]$ via Algorithm \ref{alg:general_twosided} such that their incurred energy consumption is minimized, the largest possible energy would be left for the last packet's allocation. Given $w(\tau)$ is decreasing in $\tau$, the largest energy budget would yield the shortest possible $\tau_M$, thereby minimizing $T_c = \sum_{i=1}^{M} \tau_i$. 

\par Now suppose that using the construction in the above paragraph, the resultant $\boldsymbol{\tau}$ yields $\tau_M > \tau_{M-1}$. Then, using the argument in Corollary \ref{cor:converse_post_decreasing}, such a $\boldsymbol{\tau}$ cannot be optimal. To handle this case, we first note that Algorithm \ref{alg:general_twosided} is employed for packets $[1,\dots,M-1]$ and outputs transmission durations that come in subgroups (say its last subgroup is \mcg{$\{\tau_{k_e+1},\dots,\tau_{M-1}\}$}). We know from the optimality of Algorithm \ref{alg:general_twosided}, that the packets $\{1,\dots,k_e\}$ are allocated such that their energy consumption is minimized within the interval $[0,\sum_{i=1}^{k_e} \tau_i]$. Thus, the remaining energy $w_{\mathrm{max}} - \sum_{i=1}^{k_e} w(\tau_i)$ to be allocated for the durations $\{\tau_{k_e+1},\dots,\tau_{M}\}$ is as large as possible. Since the energy function $w(\tau)$ is strictly convex and decreasing in $\tau$, in cases where $\tau_M > \tau_{M-1}$ in the initial attempt, Algorithm \ref{alg:converse_post} adds another iteration and equalizes the transmission durations of packets $\{k_e+1,\dots,M\}$, so that the sum duration is minimized. Using this principle, Algorithm \ref{alg:converse_post} runs in a backwards manner, subgroup by subgroup, until the requirement of non-increasing $\tau_i$ required by Corollary \ref{cor:converse_post_decreasing} is met. Figure \ref{fig:converse_postonly} presents an illustrative example for the argument made herein. Overall, the principle of Algorithm \ref{alg:converse_post} relies on 
\begin{itemize}
    \item partitioning the set of packets into two subsets $\{1,\dots,k\}$ and $\{k+1,\dots,M\}$ for some $k$,
    \item performing energy minimization on the left hand side, and
    \item using the maximized remaining energy to minimize completion time by equalizing durations on the right hand side.
\end{itemize}
Here, as also discussed in the above paragraph, determining the index of separation ($k$) is motivated by Corollary \ref{cor:converse_post_decreasing}.
\begin{figure*}[!t]
    \centering
    \includegraphics[width = 0.48\textwidth]{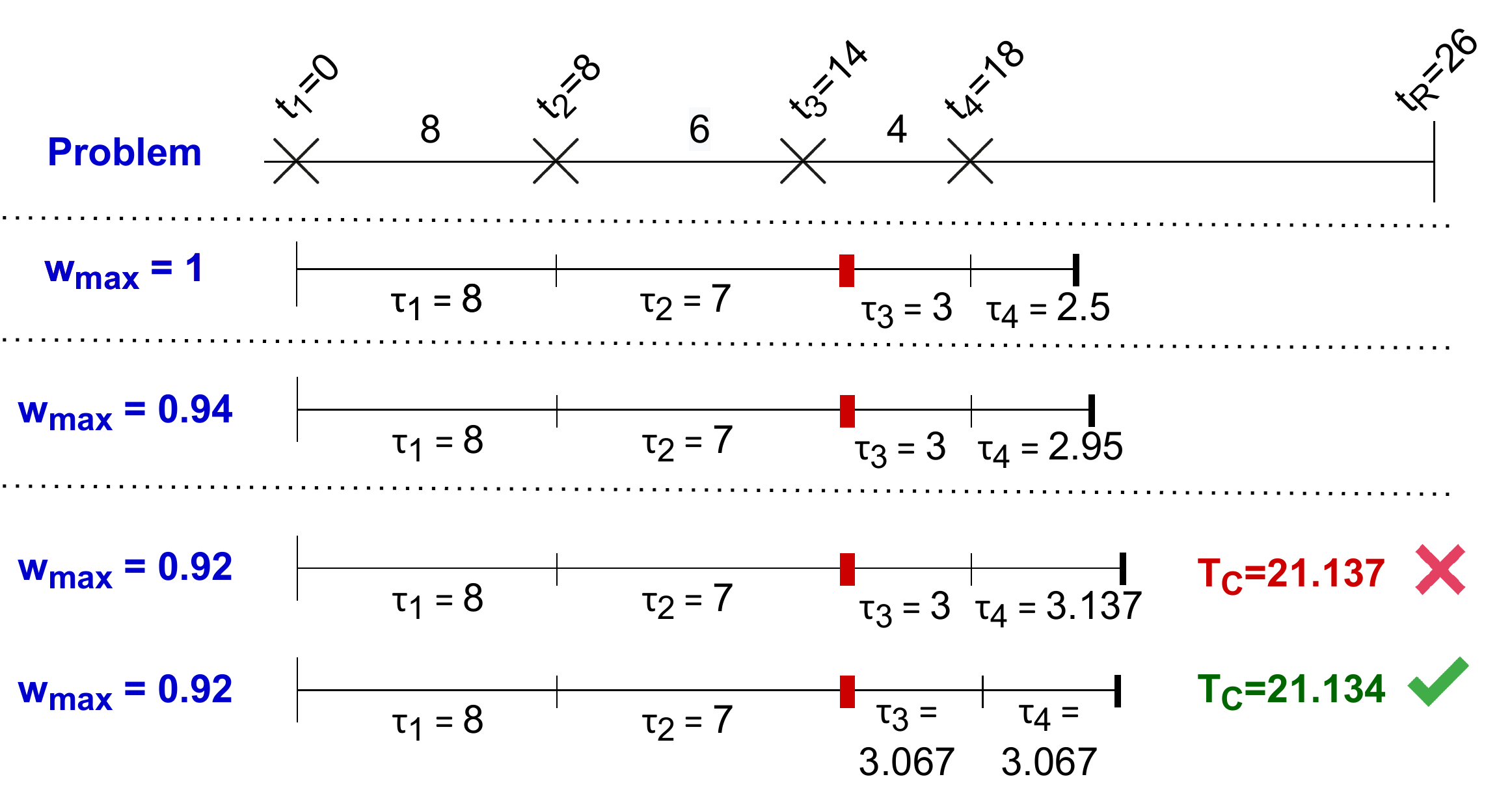}
    \caption{Demonstrative toy example for Algorithm \ref{alg:converse_post} under different $w_{\mathrm{max}}$ for $\boldsymbol{T}_{\mathrm{post}} = [\infty,11,\infty,\infty]$ and $t_R=26$. The energy cost function $w(\tau)=\frac{1}{\tau}$. Blue/red marks on departure times represent pre-/post-criticality of the marked packet, respectively.}
    \label{fig:converse_postonly}
\end{figure*}

\par Prior to proving the optimality of Algorithm \ref{alg:converse_post}, we provide the following lemma that gives sufficient conditions for optimality. Note that similar to Lemma \ref{lem:converse_needs_energy_optimal}, we again consider the more general case of $T^*_c > t_R - T_{\mathrm{post},M}$. 
\begin{lemma}[Converse of Lemma \ref{lem:converse_needs_energy_optimal}]
    \label{lem:converse_converse}
    If a vector $\boldsymbol{\tau}$ minimizes the energy cost within the time interval $[0,\sum_{i=1}^{M} \tau_i]$ with the achieved minimum energy cost $w(\boldsymbol{\tau}) = w_{\mathrm{max}}$, then it is completion-time-optimal with $\sum_{i=1}^{M} \tau_i = T^*_c$.
\end{lemma}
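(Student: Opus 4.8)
The plan is to argue by contradiction, using that the hypothesized $\boldsymbol{\tau}$ is at once \emph{feasible} for the completion-time problem \eqref{eq:converse_problem} and \emph{energy-optimal} for its own horizon. First I would observe that $\boldsymbol{\tau}$ satisfies every constraint of \eqref{eq:converse_problem}: it obeys the non-idling and post-delay lower bounds (being a feasible point of the energy problem), the pre-delay upper bounds are non-binding in the post-delay-only regime since each partial sum is at most $\sum_{i=1}^M \tau_i \le t_R$, and $w(\boldsymbol{\tau}) = w_{\mathrm{max}} \le w_{\mathrm{max}}$. Hence $T_c \doteq \sum_{i=1}^M \tau_i$ is a feasible completion time, so $T_c \ge T^*_c$ by definition of the minimum, and it suffices to rule out $T_c > T^*_c$.

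Assume then that $T_c > T^*_c$, and let $\boldsymbol{\tau}'$ be a feasible vector attaining the minimum, so $\sum_{i=1}^M \tau'_i = T^*_c$ and $w(\boldsymbol{\tau}') \le w_{\mathrm{max}}$. The key construction is to ``stretch'' $\boldsymbol{\tau}'$ out to the longer horizon $T_c$ by lengthening only the final transmission: define $\boldsymbol{\tau}''$ by $\tau''_i = \tau'_i$ for $i < M$ and $\tau''_M = \tau'_M + (T_c - T^*_c)$, so that $\sum_{i=1}^M \tau''_i = T_c$. I would then verify that $\boldsymbol{\tau}''$ is feasible for the energy-minimization problem with end time $t_E = T_c$: the partial sums $\sum_{i=1}^k \tau''_i$ for $k<M$ coincide with those of $\boldsymbol{\tau}'$ and hence still clear the non-idling and post-delay floors, while the terminal sum satisfies $\sum_{i=1}^M \tau''_i = T_c \ge T^*_c \ge t_R - T_{\mathrm{post},M}$. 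Because $w$ is strictly decreasing and $T_c - T^*_c > 0$, enlarging the last duration strictly lowers its cost, so by additivity $w(\boldsymbol{\tau}'') = w(\boldsymbol{\tau}') - w(\tau'_M) + w\big(\tau'_M + (T_c - T^*_c)\big) < w(\boldsymbol{\tau}') \le w_{\mathrm{max}}$.

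The contradiction is then immediate from the optimality hypothesis on $\boldsymbol{\tau}$: since $\boldsymbol{\tau}$ minimizes the energy cost over the horizon $[0,T_c]$ and $\boldsymbol{\tau}''$ is a feasible competitor on that same horizon, we get $w(\boldsymbol{\tau}) \le w(\boldsymbol{\tau}'') < w_{\mathrm{max}}$, which contradicts the assumed equality $w(\boldsymbol{\tau}) = w_{\mathrm{max}}$. Therefore $T_c = T^*_c$, and as $\boldsymbol{\tau}$ is feasible for \eqref{eq:converse_problem} and attains this value, it is completion-time-optimal.

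I expect the only delicate point to be confirming that the stretched vector $\boldsymbol{\tau}''$ remains feasible — specifically that enlarging the single (last) duration never violates any lower-bound constraint and that the terminal sum still exceeds the post-delay floor $t_R - T_{\mathrm{post},M}$. Both are structurally immediate, since raising $\tau_M$ only increases $\sum_{i=1}^M \tau_i$ while leaving every earlier partial sum untouched, so no case analysis is needed and the strict monotonicity of $w$ supplies the strict energy decrease. This mirrors the forward direction of Lemma \ref{lem:converse_needs_energy_optimal} and completes the equivalence between energy- and completion-time optimality.
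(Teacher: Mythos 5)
Your proof is correct and follows essentially the same route as the paper: assume a strictly shorter completion time is achievable, stretch that competitor's last duration $\tau'_M$ to fill the horizon $[0,\sum_{i=1}^M \tau_i]$, and use the strict monotonicity of $w$ to produce energy strictly below $w_{\mathrm{max}}$, contradicting the energy-optimality of $\boldsymbol{\tau}$. If anything, your version is slightly more self-contained, since you work directly from $w(\boldsymbol{\tau}') \le w_{\mathrm{max}}$ (feasibility) rather than invoking Lemma \ref{lem:converse_needs_energy_optimal} to first establish $w(\boldsymbol{\tau}') = w_{\mathrm{max}}$, and you make the feasibility checks of the stretched vector explicit where the paper leaves them implicit.
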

\begin{proof}
    Suppose $\sum_{i=1}^{M} \tau_i$ is energy-optimal in $[0,\sum_{i=1}^{M} \tau_i]$ with $w(\boldsymbol{\tau}) = w_{\mathrm{max}}$, but the optimal duration vector is $\boldsymbol{\tau}' \neq \boldsymbol{\tau}$, that is $\sum_{i=1}^{M} \tau'_i = T^*_c < \sum_{i=1}^{M} \tau_i$. From Lemma \ref{lem:converse_needs_energy_optimal}, we necessarily have $w(\boldsymbol{\tau}') = w_{\mathrm{max}}$. Then, due to $w(\tau)$ being decreasing in $\tau$, by increasing only $\tau'_M$ so that $\sum_{i=1}^{M} \tau'_i = \sum_{i=1}^{M} \tau_i$ is met, one can achieve $w(\boldsymbol{\tau}') < w_{\mathrm{max}}$. Hence, $\boldsymbol{\tau}'$ cannot be the energy-minimizing duration vector for the interval $[0,T_c]$, which yields a contradiction. 
\end{proof}
\begin{theorem}
    \label{theo:converse_post_optimal}
    The output of Algorithm \ref{alg:converse_post} (say $\boldsymbol{\tau}$) yields the optimal transmission duration vector that minimizes $T_c$. 
\end{theorem}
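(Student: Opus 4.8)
The plan is to invoke the sufficient condition of Lemma \ref{lem:converse_converse}: to prove Algorithm \ref{alg:converse_post} optimal it suffices to show that its output $\boldsymbol{\tau}$ (i) exhausts the budget, $w(\boldsymbol{\tau}) = w_{\mathrm{max}}$, and (ii) is the energy-minimizing allocation over its own horizon $[0, T_c]$, where $T_c = \sum_{i=1}^M \tau_i$; the theorem then follows immediately. Property (i) holds by construction in every branch: on the terminating pass of the while loop the first $k$ durations keep their energy-minimal values $\tau_i^{(b)}$ while the remaining $M-k$ are set to $w^{-1}\!\big(w_r/(M-k)\big)$ with $w_r = w_{\mathrm{max}} - \sum_{i=1}^k w(\tau_i^{(b)})$, so that $\sum_{i=1}^M w(\tau_i) = \sum_{i=1}^k w(\tau_i^{(b)}) + w_r = w_{\mathrm{max}}$. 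I would first dispose of the degenerate branch (Case 2b-i), where the budget suffices to reach the post-delay floor $T_c = t_R - T_{\mathrm{post},M}$, the least feasible value of $\sum_i \tau_i$; there the minimum completion time is attained outright and any budget-feasible allocation summing to this floor is optimal.

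For property (ii) --- the substance of the proof --- I would show that the terminating output equals the allocation produced by Algorithm \ref{alg:general_twosided} when run on all $M$ packets with end time $t_E = T_c$, so that optimality is inherited from Theorem \ref{theo:optimality_twosided}. The argument hinges on two structural facts specific to the post-delay-only energy problem. First, post-critical subgroups are pinned: a subgroup whose last packet $k$ critically meets $\sum_{j=1}^k \tau_j = t_R - T_{\mathrm{post},k}$ is fixed by the arrivals and post-delays of packets up to $k$ and is independent of the end time, so the prefix subgroups computed by Algorithm \ref{alg:general_twosided} on the reduced instance (horizon $t_M$ in Case 1, or $t_R - T_{\mathrm{post},i^*}$ in Case 2a) remain valid and critically satisfied in the full instance. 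Second, because no post-delay binds beyond the pinned prefix, the terminal subgroup is unconstrained, and minimizing a strictly convex cost over a fixed interval forces all its durations equal (Jensen) --- precisely the equalization $\tau_{k+1} = \dots = \tau_M$ that the algorithm performs with the maximized leftover energy $w_r$.

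I would then identify the while loop's termination test as the mechanism that places the prefix/suffix boundary at a legitimate subgroup boundary. By Corollary \ref{cor:converse_post_decreasing} --- equivalently, conditions 1 and 2 of Lemma \ref{lem:optimality_conds_subgroup} specialized to this setting, where no pre-critical subgroup can arise --- an optimal allocation must have non-increasing subgroup durations. Whenever the tentative equalized suffix would exceed the preceding prefix subgroup, the test $\tau_i \geq \tau_{i+1}$ fails, the loop absorbs one more prefix subgroup into the suffix and re-equalizes, mirroring exactly the balancing Algorithm \ref{alg:general_twosided} performs when additional horizon becomes available. The loop thus halts at the unique split for which the terminal subgroup is no larger than its predecessor, rendering the full allocation non-increasing with every subgroup either regular or post-critical: this is the subgroup signature of the energy-optimal schedule at end time $T_c$, so the two allocations coincide. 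Cases 2a and 2b-ii reduce to this argument verbatim once the reduced horizon and pinning index $i^*$ are chosen as in the algorithm.

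The hard part will be rigorously justifying the first structural fact --- that enlarging the horizon to $T_c$ leaves every pinned prefix subgroup untouched and perturbs only the terminal subgroup and its boundary. Although the post-critical departures $t_R - T_{\mathrm{post},k}$ are manifestly independent of $T_c$, a careful proof must track how the per-iteration $\min$ and $\argmin$ over the candidate set $A_{[k]}$ in Algorithm \ref{alg:general_twosided} respond to the larger end time and confirm that the maximizing index of Line 4 does not migrate within the prefix; equivalently, one must check that the reduced-instance subgroups stay feasible and critically satisfied in the full instance so that the greedy construction regenerates them verbatim before reaching the terminal subgroup. Establishing this monotone-response property of the greedy balancing is the crux on which the claimed equality of the two allocations rests.
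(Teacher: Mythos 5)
Your skeleton matches the paper's: dispose of Case 2b-i directly, note $w(\boldsymbol{\tau}) = w_{\mathrm{max}}$ by construction, and invoke Lemma \ref{lem:converse_converse} so that everything reduces to energy-optimality of $\boldsymbol{\tau}$ on $[0,\sum_{i=1}^{M}\tau_i]$. But at precisely that point your proposal has a genuine gap, which you yourself flag as ``the crux'': you channel all of property (ii) through the claim that Algorithm \ref{alg:converse_post}'s output coincides with the output of Algorithm \ref{alg:general_twosided} run on all $M$ packets with end time $t_E = T_c$, and that coincidence rests on an unproven ``monotone-response'' property of the greedy --- that enlarging the horizon leaves every prefix subgroup untouched and only grows or merges the terminal subgroup. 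Observing that the post-critical departure times $t_R - T_{\mathrm{post},k}$ are horizon-independent does not establish this; one would still have to show that the $\min$/$\argmin$ selections over $A_{[k]}$ and the $\argmax$ in Line 4 of Algorithm \ref{alg:general_twosided} reproduce the same prefix under the larger end time. Since you leave this lemma unproven, the proof as proposed does not close.

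The paper avoids this issue entirely: it never identifies the two algorithms' outputs. Instead it proves energy-optimality of $\boldsymbol{\tau}$ directly by majorization, using only facts about $\boldsymbol{\tau}$ that hold by construction: $\boldsymbol{\tau}$ is non-increasing (the loop's termination test, matching Corollary \ref{cor:converse_post_decreasing}), and every maximal run of equal durations in the retained prefix ends either at a regular point ($\sum_{i=1}^{j_c}\tau_i = \sum_{i=1}^{j_c} d_i$) or at a post-critical point ($\sum_{i=1}^{j_c}\tau_i = t_R - T_{\mathrm{post},j_c}$). Hence, if a competitor $\boldsymbol{\tau}'$ with the same total sum fails to majorize $\boldsymbol{\tau}$, the first prefix where $\boldsymbol{\tau}$'s partial sum strictly exceeds $\boldsymbol{\tau}'$'s extends to such a boundary $j_c$; for $j_c<M$ the strict inequality $\sum_{i=1}^{j_c}\tau'_i < \sum_{i=1}^{j_c}\tau_i$ then violates either non-idling or the post-delay constraint of packet $j_c$ (so $\boldsymbol{\tau}'$ is infeasible), and for $j_c = M$ it contradicts equality of the total sums. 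Schur-convexity of $w$ finishes. If you want to salvage your route, the cleanest repair is to replace the greedy-tracking lemma by exactly this argument; combined with uniqueness of the strictly convex optimum, it even yields the coincidence of the two outputs as a corollary rather than requiring it as a premise.
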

\begin{proof}
    The proof is presented in Appendix \ref{ap:proof_of_comptime_post}.
\end{proof}    
Note that unlike the energy-minimization problem, the values of the optimal duration vector $\boldsymbol{\tau}$ changes with the particular energy function $w(\tau)$. That said, as long as $w(\tau)$ is strictly convex and decreasing, the optimal $\boldsymbol{\tau}$ is guaranteed to be found by Algorithm \ref{alg:converse_post}.

%%%%%%%%%%%%%%%%%%%%%%%%%%%%%%%%%%%%%%
%%%%%%%%%%%%%%%%%%%%%%%%%%%%%%%%%%%%%%
%%%%%%%%%%%%%%%%%%%%%%%%%%%%%%%%%%%%%%

\subsection{Completion Time Minimization under Two-Sided Constraints}
 
\par Herein, we extend the approach for the post-delay only case to the more generalized, two-sided case. The algorithm that yields the optimal $\boldsymbol{\tau}$ is presented in Algorithm \ref{alg:converse_twosided}. Note that when presenting Algorithm \ref{alg:converse_twosided}, we only provided the explicit algorithm for the case where all post-delay deadlines occur before the last packet's arrival (\emph{i.e.,} corresponds to Case 1 of Algorithm \ref{alg:converse_post} for the post-delay only scenario), for brevity. We emphasize that similar to Case 1, Cases 2a and 2b are both identical extensions of their post-delay only counterparts, with the only difference arising when handling pre-delay violations. For all cases, said pre-delay violations are again handled as described in Lines 23-35 of Algorithm \ref{alg:converse_twosided}.

%%%%%%%%%%%%%%%%%%%%%%%%%%%%%% THE SHORT ONE %%%%%%%%%%%%%%%%%%%%%%%

\begin{algorithm}[!t] 
	\fontsize{10}{10}\selectfont
	\begin{algorithmic}[1]
			\caption{Optimal offline scheduling for transmission completion time minimization under two-sided delay constraints.}
			\label{alg:converse_twosided}
			\renewcommand{\algorithmicrequire}{\textbf{Inputs:}}
			\renewcommand{\algorithmicensure}{\textbf{Output:}}
			\REQUIRE $M$, $\boldsymbol{d}$, $w(\tau)$, $t_R$, $\boldsymbol{T}_{\mathrm{post}}$, $\boldsymbol{T}_{\mathrm{pre}}$  \\
			
			%\STATE $i^* = \argmax\limits_{i} \quad t_R - T_{\mathrm{post},i} $ \\
			
			%\IF{$t_R - T_{\mathrm{post},i^*} \leq t_M$}
		    \STATE \%\% \textit{Case 1: All post-delay deadlines occur before the last packet's arrival ($t_R - T_{\mathrm{post},i} \leq t_M $).} \\
			\STATE Run Algorithm \ref{alg:general_twosided} for packets $1,\dots,M-1$ with inter-arrival times $[d_1,\dots,d_{M-1}]$. Obtain $\tau^{(b)}_1,\dots,\tau^{(b)}_{M-1}$\\
    		$n_g$: the number of subgroups. \\
    		$\boldsymbol{n}^{(m)}_{\hspace{0.1cm} n_g \times 1}$: vector that holds the number of packets in each subgroup. \\
    		\STATE $k = \sum_{i=1}^{n_g} n^{(m)}_{i} = M-1$
    		
    		\WHILE{1}
    		    \STATE Set $\tau_i = \tau^{(b)}_i$, for $i=1,\dots,k$ \\ 
    		    \STATE Compute remaining energy $w_r = w_{\mathrm{max}} - \sum_{i=1}^{k} w(\tau^{(b)}_i)$ \\
    		    \STATE Assign equal duration for the remaining packets: $\tau_{k+1} = \dots = \tau_M = w^{-1}(\frac{w_r}{M-k})$ \\
    		    \STATE Check among $[k+1,\dots,M]$ for pre-delay violations
    		    \IF{$\sum_{j=1}^{i} \tau_j \leq t_i + T_{\mathrm{pre},i}$ for all $i=k+1,\dots,M$}
    		        \STATE \%\% \textit{No pre-delay violations, check for non-increasing ordering in the last subgroup}
        		    \IF{$\tau_k \geq \tau_{k+1}$}
        		        \STATE \textbf{break}
        	        \ELSE
        	            \STATE $n_g = n_g -1$\\
        	            \STATE $k = \sum_{i=1}^{n_g} n^{(m)}_{i} (n_g)$
        		    \ENDIF
    		    \ELSE
    		        \STATE \%\% \textit{There is a pre-delay violation. Let $k_v$ denote the minimum index among violating packets}
    		        \STATE Allocate $\tau_{i} = \frac{(t_{k_v} - T_{\mathrm{pre},k_v}) - \sum_{i=1}^{k} \tau_i}{k_v - k}$ for $i=k+1,\dots,k_v$
    		        \STATE \textbf{break} and go to Line 23
    		    \ENDIF
    		\ENDWHILE \\
			%\ENDIF \\
			
			\dotfill \\
            \STATE \%\% \textit{The Case with a Pre-Delay Violation} \\
			\WHILE{1}
			    \STATE $k \leftarrow k_v$
			    \STATE $w_r = w_{\mathrm{max}} - \sum_{i=1}^{k} w(\tau_i)$
			    \STATE $\tau_i = w^{-1}(\frac{w_r}{M-k})$, for $i=k+1,\dots,M$
			    \STATE Check among $[k+1,\dots,M]$ for pre-delay violations 
			    \IF{$\sum_{j=1}^{i} \tau_j \leq t_i + T_{\mathrm{pre},i}$ for all $i=k+1,\dots,M$}
			        \STATE \textbf{break} 
		        \ELSE
		            \STATE \%\% \textit{There is a pre-delay violation. Let $k_v$ denote the minimum index among violating packets}
    		        \STATE Allocate $\tau_{i} = \frac{(t_{k_v} - T_{\mathrm{pre},k_v}) - \sum_{i=1}^{k} \tau_i}{k_v - k}$ for $i=k+1,\dots,k_v$
			    \ENDIF
			\ENDWHILE \\
			\dotfill \\
			\textbf{return} $\boldsymbol{\tau}$
	\end{algorithmic}
\end{algorithm}

\par Algorithm \ref{alg:converse_twosided} operates in a similar fashion to Algorithm \ref{alg:converse_post}. However, unlike the post-delay only case, this need not translate to having non-increasing $\tau_i$ for all $i$. This follows from the fact that in a two-sided system, the optimal $\boldsymbol{\tau}$ would not have non-increasing $\tau_i$ for all $i$, in case there are pre-critical packets (see Lemma \ref{lem:optimality_conds_subgroup}). This is a direct result of energy optimality being a necessity for completion time optimality (Lemma \ref{lem:converse_needs_energy_optimal}), and Lemma \ref{lem:optimality_conds_subgroup}. Thus, instead of looking at the whole vector for non-increasing durations (as was done in Algorithm \ref{alg:converse_post}), Algorithm \ref{alg:converse_twosided} seeks to ensure non-increasing durations between the last subgroup of the ``left side" and the equalized durations on the ``right side", if the equal allocation on the right side complies with all pre-delay constraints of their respective packets.

\par In the existence of a pre-delay violation on the right side, Algorithm \ref{alg:converse_twosided} records the first (smallest index) of such violations (say $k_v$), and allocates equal durations for $\{k+1,\dots,k_v\}$ between $[\sum_{i=1}^{k} \tau_i, t_{k_v}+T_{\mathrm{pre},k_v}]$, where $k$ is the last packet of the previous subgroup. Note that for packets $\{1,\dots,k_v\}$, the time interval $[0,\sum_{i=1}^{k_v} \tau_i]$ cannot be made larger than $t_{k_v} + T_{\mathrm{pre},k_v}$, and this equal allocation, alongside the ``left side" (\emph{i.e.,} packets $\{1,\dots,k\}$), ensure energy optimality (follows identical from Appendix \ref{ap:proof_of_twosided}). For the remaining packets $\{k_v + 1,\dots,M\}$, we equalize their durations with the remaining energy budget and check their pre-delay violations once again. We repeat this process until there are no pre-delay violations in the last batch, which finishes the algorithm's operation and terminates it. Figure \ref{fig:converse_twosided} presents a toy example to illustrate the effects of a pre-critical packet. Note that Figure \ref{fig:converse_twosided} considers the identical problem as Figure \ref{fig:converse_postonly}, with the added pre-delay of $T_{\mathrm{pre},3} = 4.5$ for demonstrative purposes.

\begin{figure}[!t]
    \centering
    \includegraphics[width = 0.48\textwidth]{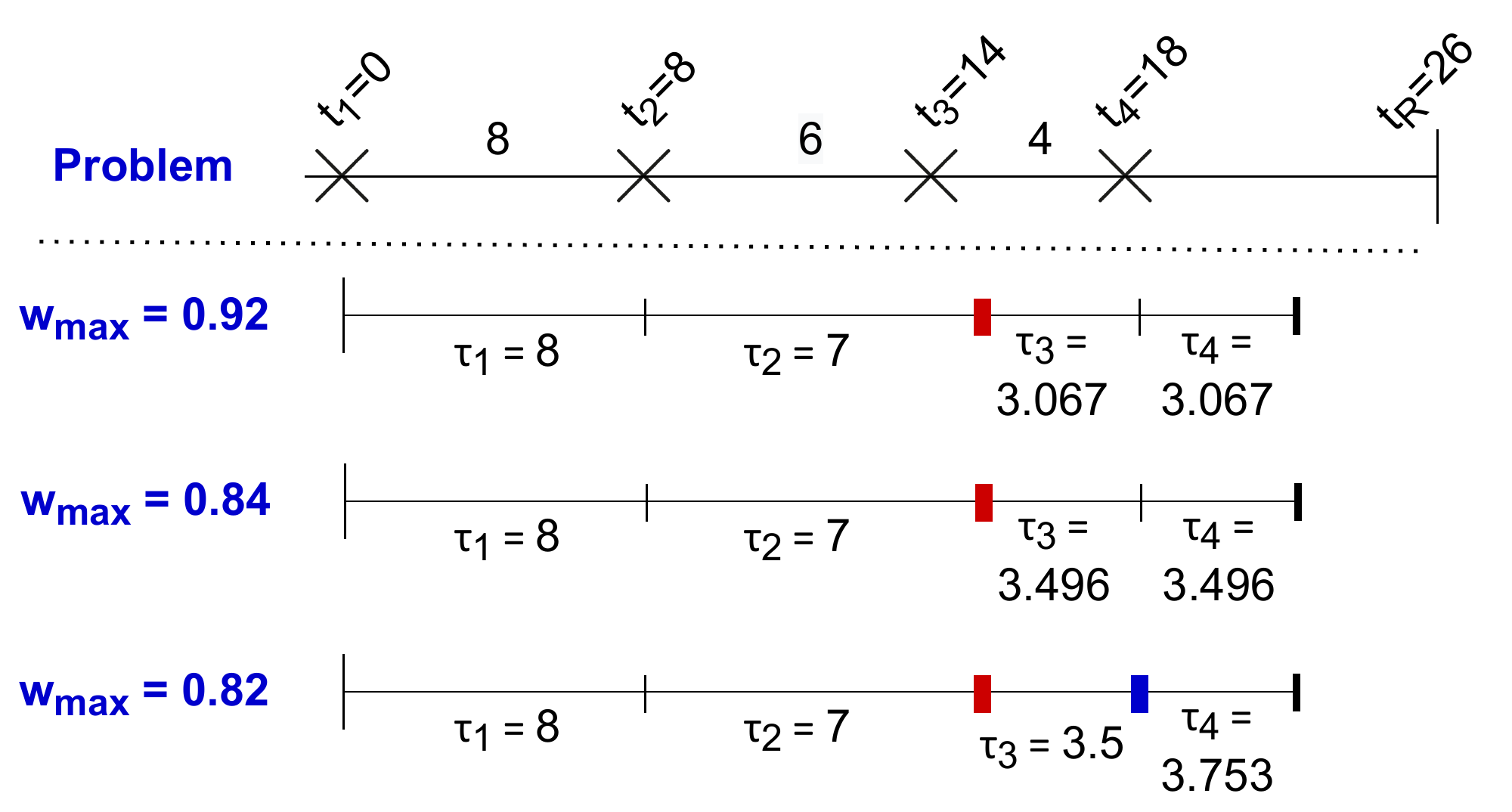}
    \caption{Demonstrative toy example for Algorithm \ref{alg:converse_twosided} under different $w_{\mathrm{max}}$ for $\boldsymbol{T}_{\mathrm{post}} = [\infty,11,\infty,\infty]$, $\boldsymbol{T}_{\mathrm{pre}} = [\infty,\infty,4.5,\infty]$, and $t_R=26$. The energy cost function $w(\tau)=\frac{1}{\tau}$. Blue/red marks on departure times represent pre-/post-criticality of the marked packet, respectively.}
    \label{fig:converse_twosided}
\end{figure}

\begin{theorem}
    \label{theo:converse_twosided_optimal}
    The output of Algorithm \ref{alg:converse_twosided} (say $\boldsymbol{\tau}$) yields the optimal transmission duration vector that minimizes $T_c$ under two-sided delay constraints. 
\end{theorem}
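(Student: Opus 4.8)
The plan is to reduce completion-time optimality to energy optimality through Lemma~\ref{lem:converse_converse}, and then to verify that the output $\boldsymbol{\tau}$ of Algorithm~\ref{alg:converse_twosided} satisfies that lemma's hypotheses. Concretely, Lemma~\ref{lem:converse_converse} tells us it suffices to show that the returned vector (i) exhausts the budget, $w(\boldsymbol{\tau}) = w_{\mathrm{max}}$, and (ii) is the energy-minimizing allocation over the horizon $[0,\sum_{i=1}^{M}\tau_i]$ that it induces; here we work in the non-degenerate regime $T^*_c > t_R - T_{\mathrm{post},M}$, the boundary case being immediate. Property (i) is built into the construction: whichever \textbf{while}-loop terminates the algorithm does so only after the equalization step $\tau_{k+1}=\dots=\tau_M = w^{-1}(\frac{w_r}{M-k})$ has poured the entire residual energy $w_r = w_{\mathrm{max}} - \sum_{i=1}^{k} w(\tau_i)$ into the final block, so no energy is left unused. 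The whole argument therefore rests on (ii).

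For (ii), I would mirror the structure of the proof of Theorem~\ref{theo:converse_post_optimal} and decompose $\boldsymbol{\tau}$ at its final partition index $k$. On the left, $\{\tau_1,\dots,\tau_k\}$ is the output of Algorithm~\ref{alg:general_twosided} restricted to those packets and is thus energy-optimal on $[0,\sum_{i=1}^{k}\tau_i]$ by Theorem~\ref{theo:optimality_twosided}; in particular the residual energy handed to the right block is maximal, which by monotonicity of $w$ yields the shortest possible right-block extension. On the right, equalized durations are the unique minimizer of a strictly convex separable cost under a fixed total, hence energy-optimal whenever no non-idling, post-, or pre-delay constraint is active there. The role of the check $\tau_k \geq \tau_{k+1}$ together with the backward subgroup-merging (decrementing $n_g$) is precisely to certify compatibility with Lemma~\ref{lem:optimality_conds_subgroup}: if the equalized right value exceeds $\tau_k$, the boundary violates the required ordering of a regular/post-critical subgroup, energy can be shifted across $k$, and the algorithm correctly merges the last left subgroup into the right block and re-equalizes, mirroring the balancing argument of Lemma~\ref{lem:ouralg_decreasingtau}.

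The genuinely new ingredient, and the step I expect to be the main obstacle, is the forward handling of pre-delay violations in Lines~23--35. When the unconstrained equalization of the right block would push some departure $\sum_{j=1}^{i}\tau_j$ past $t_i + T_{\mathrm{pre},i}$, the algorithm identifies the smallest such index $k_v$, fills $\{k+1,\dots,k_v\}$ critically so that $k_v$ becomes pre-critical, and recurses on $\{k_v+1,\dots,M\}$ with updated residual energy. I would argue optimality as follows: the sub-horizon $[0,\sum_{i=1}^{k_v}\tau_i]$ cannot be lengthened beyond $t_{k_v}+T_{\mathrm{pre},k_v}$ without violating that pre-delay, so among all feasible allocations reaching the same completion time, equalizing $\{k+1,\dots,k_v\}$ inside this maximal sub-horizon is the energy-minimal choice for those packets, consistent with item~3 of Lemma~\ref{lem:optimality_conds_subgroup} (which permits a pre-critical subgroup to be followed by strictly larger durations). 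Crucially, making $k_v$ pre-critical does not disturb left-side optimality, since the pre-critical boundary decouples $\{1,\dots,k_v\}$ from $\{k_v+1,\dots,M\}$ exactly as in the decoupling argument preceding Algorithm~\ref{alg:general_twosided}.

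To close, I would establish termination and invoke Lemma~\ref{lem:converse_converse}. Termination follows because each pass of the pre-delay loop either ends with no violation or strictly advances $k_v$ toward $M$, while each pass of the ordering loop strictly decreases $n_g$, so only finitely many iterations occur. The hardest part to make rigorous is the \emph{interaction} between the two loops: one must verify that introducing pre-critical boundaries on the right never retroactively breaks the non-increasing relations certified earlier on the left, and that the terminal schedule is simultaneously energy-optimal on every maximal sub-horizon delimited by its critical packets. Once this piecewise optimality is assembled into global energy optimality over $[0,\sum_{i=1}^{M}\tau_i]$ — using that the separable convex cost decomposes across the decoupling critical boundaries — property (ii) holds, and Lemma~\ref{lem:converse_converse} immediately gives $\sum_{i=1}^{M}\tau_i = T^*_c$, completing the proof.
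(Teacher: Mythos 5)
Your proposal is correct and follows essentially the same route as the paper's own proof: the paper likewise reduces the claim to Lemma~\ref{lem:converse_converse}, observes that $w(\boldsymbol{\tau}) = w_{\mathrm{max}}$ holds by construction, obtains energy optimality of $\boldsymbol{\tau}$ on $[0,\sum_{i=1}^{M}\tau_i]$ from the energy-minimization analysis of Appendix~\ref{ap:proof_of_twosided}, and treats the degenerate case $T_c = t_R - T_{\mathrm{post},M}$ (Case 2b-i) separately. Your structural elaboration of the energy-optimality step (left/right decomposition, pre-critical decoupling, loop termination) is precisely the detail the paper compresses into its citations of Appendices~\ref{ap:proof_of_twosided}--\ref{ap:proof_of_comptime_post}, so the two arguments coincide in substance.
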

\begin{proof}
    The proof strategy is identical to the proof for the post-delay only case in Appendix \ref{ap:proof_of_comptime_post}. 
    \begin{itemize}
        \item For all cases except Case 2b-i, the proof follows from Lemma \ref{lem:converse_converse}. In particular, by design, Algorithm \ref{alg:converse_twosided} uses all available energy (\emph{i.e.,} $w_{\mathrm{max}}$). Showing the second requirement of the lemma, that is, the energy optimality of the resultant $\boldsymbol{\tau}$ within the interval $[0,\sum_{i=1}^{M} \tau_i]$, directly follows from the energy optimality proof in Appendix \ref{ap:proof_of_twosided}. Thus, the sufficient conditions in Lemma \ref{lem:converse_converse} are met, implying $\boldsymbol{\tau}$ minimizes the completion time.
        
        \item Similar to the post-delay only case, Case 2b-i corresponds to the case where i) $t_R - T_{\mathrm{post},M} > t_R - T_{\mathrm{post},i}$ for all $i \neq M$, ii) $t_R - T_{\mathrm{post},M} > t_M$, and iii) running Algorithm \ref{alg:general_twosided} between $[0,t_R - T_{\mathrm{post},M}]$ consumes $w(\boldsymbol{\tau}) \leq w_{\mathrm{max}}$ while achieving $\sum_{i=1}^{M} \tau_i = t_R - T_{\mathrm{post},M}$. Thus, the available energy can satisfy $T_c = t_R - T_{\mathrm{post},M}$, which is the smallest possible completion time. % under FIFO.
    \end{itemize}
\end{proof}

\noindent \mcg{\textbf{Computational Complexity:} For time minimization, Algorithms \ref{alg:converse_post} and \ref{alg:converse_twosided} rely on running the energy-minimizing Algorithm \ref{alg:general_twosided} first, which is $\mathcal{O}(M^3)$ in time and $\mathcal{O}(M)$ in space complexity at the worst case. The rest of the algorithms involve (at most $M$) comparisons of the elements of an $M$-element $\boldsymbol{\tau}$ vector, which incurs $\mathcal{O}(M^2)$ time and $\mathcal{O}(M)$ space complexity. Thus, the time minimization algorithms are also $\mathcal{O}(M^3)$ in time and $\mathcal{O}(M)$ in space.}

\section{\mcg{Numerical Results}}
\label{sec:numericalresults}

\subsection{Energy Optimization}
\label{subsec:energy_numerical}
\mcg{
\par Algorithm \ref{alg:general_twosided} generalizes the approaches in \cite{uysal2002_singledelay} and \cite{chen2008_predelay} to settings where two-sided delay constraints are present. In fact, in systems with no post-delay constraints ($T_{\mathrm{post},i} = \infty$), the system only has pre-delay constraints and Algorithm \ref{alg:general_twosided} is equivalent to the approach in \cite{chen2008_predelay}. If pre-delays are also absent ($T_{\mathrm{pre},i} = t_R-t_i$), the remaining system is identical to \cite{uysal2002_singledelay} (see Section \ref{sec:prelims}). Motivated by this, we present Figure \ref{fig:twosided_numerical}, which compares Algorithm \ref{alg:general_twosided} with its natural baselines: \cite{uysal2002_singledelay,chen2008_predelay}, and its own post-delay-only version (that disregards pre-delays).

\par Under two-sided delay constraints, schedulers that consider only pre-, post-, or no individual deadlines can potentially cause a delay violation on the sides they do not consider, leading to packet loss. To provide a fair comparison among schemes that might lose different amounts of packets, Figure \ref{fig:twosided_numerical} compares each scheme in terms of \emph{energy per successful packet transmission}, that is $\frac{w(\boldsymbol{\tau})}{\text{\# successful packets}}$. Given an arrival time $t_i$, the pre- and post-delays of each packet in the figure is such that its valid departure region is $[t_i + T, t_i + 2T]$ (\emph{i.e.,} $T_{\mathrm{pre},i} = 2T$ and $T_{\mathrm{post},i} = t_R - t_i - T$), with $T$ as the sweep parameter. Arrival times $t_i$ are chosen uniformly at random within $[0,t_R - 2T]$, and each data point on the figure is averaged over $10^4$ trials.

% \begin{figure}[!t]
%     \centering
%     \includegraphics[width = 0.4\textwidth]{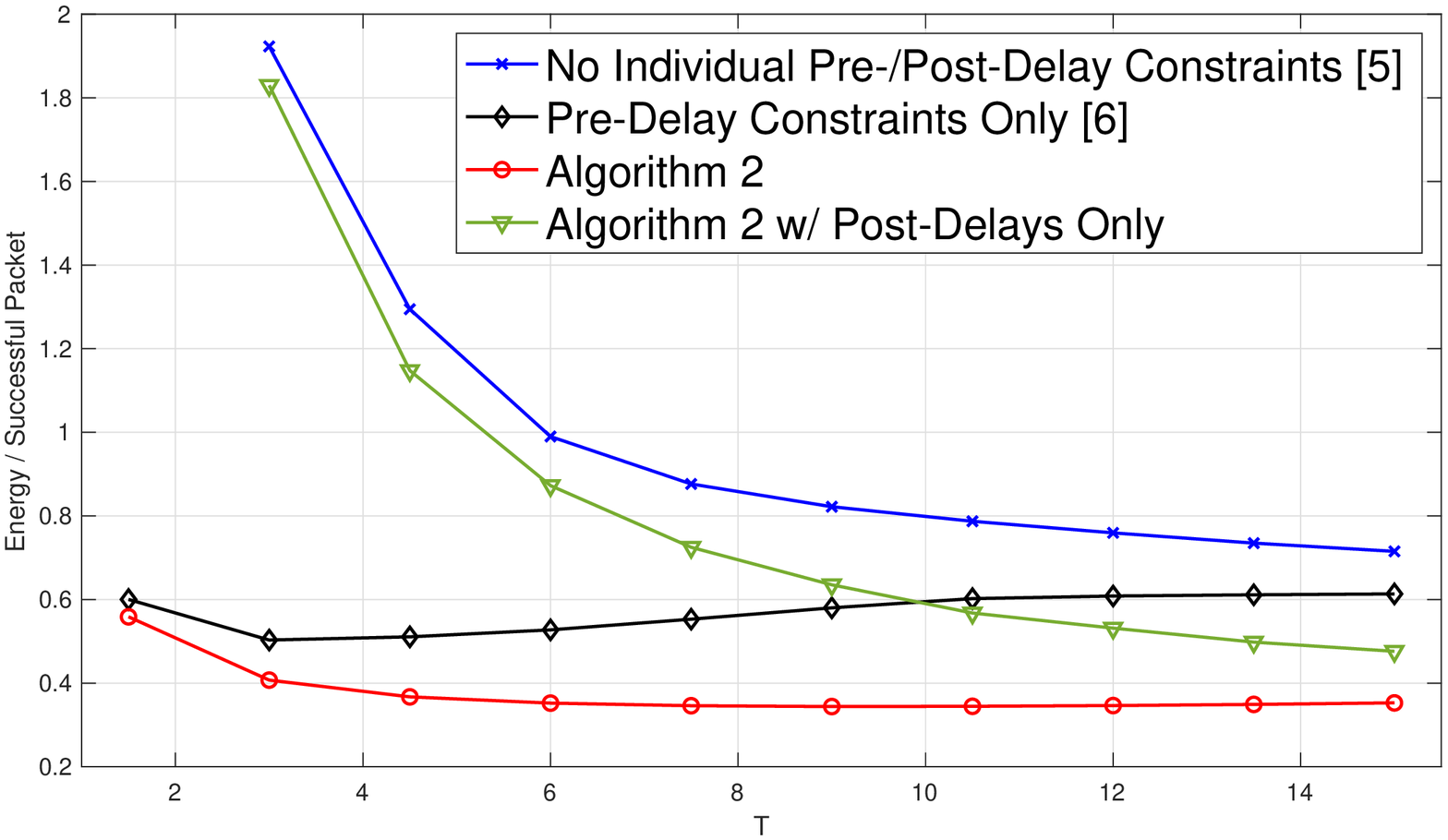}
%     \caption{\mcg{Energy per successful packet vs. $T$ under two-sided delay constraints. $t_R = 100$, $M=30$, $w(\tau) = \frac{1}{\tau}$. } }
%     \label{fig:twosided_numerical}
% \end{figure}

\begin{figure}
\centering
\begin{minipage}{.44\textwidth}
  \centering
  \includegraphics[width=.95\linewidth]{numerical_M30.eps}
  \captionof{figure}{ \mcg{Energy per successful packet vs. $T$ under two-sided delay constraints. $t_R = 100$, $M=30$, $w(\tau) = \frac{1}{\tau}$. } }
  \label{fig:twosided_numerical}
\end{minipage}%
\quad
\begin{minipage}{.44\textwidth}
  \centering
  \includegraphics[width=.95\linewidth]{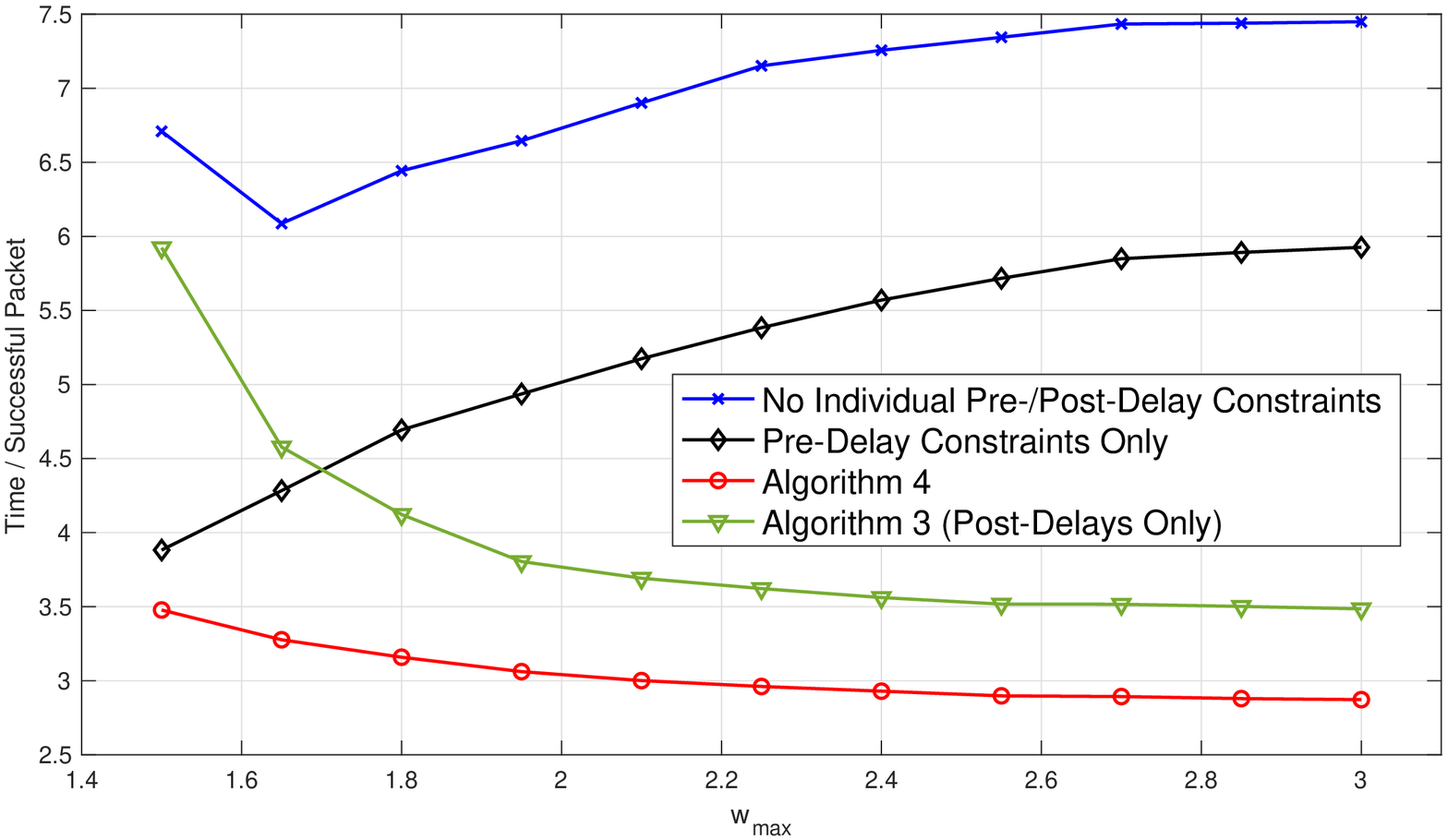}
  \captionof{figure}{\mcg{Time per successful packet vs. $w_{\mathrm{max}}$ under two-sided delay constraints. $t_R = 20$, $M=5$, $w(\tau) = \frac{1}{\tau}$, $T=3$.}}
  \label{fig:twosided_numerical_dual}
\end{minipage}
\end{figure}

\par In the small $T$ regime of Figure \ref{fig:twosided_numerical}, each packet's transmission duration is limited by a severely small pre-delay. The schemes that do not take this into consideration (no individual delays \cite{uysal2002_singledelay} and Algorithm \ref{alg:general_twosided} with post-delays only) fail in transmitting a packet successfully without a (pre-) delay violation. As $T$ increases, the pre-delays become less restrictive, and the valid departure regions of each packet are expanded, which together reduces the cost. However, as $T$ further increases, the post-delays become more restrictive, and the pre-delay only scheme (\cite{chen2008_predelay}) which does not consider post-delays in its allocation, drops more packets and its performance gets degraded. Overall, the results of Figure \ref{fig:twosided_numerical} show that in the evaluated system of interest wherein two-sided deadlines are present, Algorithm \ref{alg:converse_post} provides the lowest energy cost per successful packet in both smaller and larger $T$ regimes. }

\subsection{Completion Time Minimization}

\mcg{
\par Figure \ref{fig:twosided_numerical_dual} presents numerical results on the completion time minimizing scheme (Algorithm \ref{alg:converse_twosided}) with respect to the available energy budget $w_{\mathrm{max}}$, and compares it to its counterparts that only considers post-delays (Algorithm \ref{alg:converse_post}), only considers pre-delays, and no individual deadlines. Similar to the energy minimization case in Subsection \ref{subsec:energy_numerical}, each scheme's completion time is normalized on a per-successfully transmitted packet basis (\emph{i.e,} $\frac{T_c}{\text{\# successful packets}}$). Similar to Subsection \ref{subsec:energy_numerical}, each packet's valid region is $[t_i + T, t_i + 2T]$, and each data point is averaged over $10^4$ trials where $t_i$ are selected as random for each trial.

% \begin{figure}[!t]
%     \centering
%     \includegraphics[width = 0.4\textwidth]{numerical_dual_M5.eps}
%     \caption{\mcg{Time per successful packet vs. $w_{\mathrm{max}}$ under two-sided delay constraints. $t_R = 20$, $M=5$, $w(\tau) = \frac{1}{\tau}$, $T=3$.} }
%     \label{fig:twosided_numerical_dual}
% \end{figure}

\par As expected, the completion time of Algorithm \ref{alg:converse_twosided} decreases with increasing $w_{\mathrm{max}}$, due to the decreasing nature of $w(\tau)$. On the other hand, the scheduler that only considers pre-delays is observed to perform worse as $w_{\mathrm{max}}$ increases. This is because it neglects post-delays, and therefore seeks to over-minimize the departure times with increasing $w_{\mathrm{max}}$, causing post-delay violations thus packet losses. Note that Algorithm \ref{alg:converse_post} does not face this issue, as the goal of the time-minimization problem is decreasing said departure times, which inherently addresses constraints due to pre-delays. The no-individual deadline scheme incurs bad performance due to limited energy in the small $w_{\mathrm{max}}$ regime, and as $w_{\mathrm{max}}$ increases, it runs into the same post-delay violation issue as the pre-delay-only algorithm. The results of Figure \ref{fig:twosided_numerical_dual} show that overall, Algorithm \ref{alg:converse_twosided} consistently yields the lowest completion time per successful packet. 

}

% we present Figure \ref{fig:twosided_numerical}, which compares Algorithm \ref{alg:general_twosided} with its natural baselines: \cite{uysal2002_singledelay,chen2008_predelay}, and its own post-delay-only version (that disregards pre-delays).

%%%%%%%%%%
 
\section{Conclusions}
\label{sec:conclusions}

\par In this paper, the theoretical framework of packet scheduling under two-sided delay constraints has been addressed, wherein transmitting a packet too late (too stale) and too early (too fresh) are both undesired. To this end, firstly the energy-efficient packet scheduling problem under two-sided individual delay constraints has been introduced, which generalizes the conventional one-sided delay constrained scheduling. For the newly posed two-sided problem, an algorithm that yields the provably optimal transmission duration vector has been provided. Furthermore, the ``dual" problem has been formulated, which seeks to minimize transmission completion time subject to an energy budget, again under individual two-sided delay constraints. The algorithm that provably minimizes the transmission duration has been developed and presented.

\par The developed algorithms assume perfect and non-causal knowledge of arrival times and delay constraints, and are offline algorithms. Depending on the application, the necessary information of delay constraints or arrival times might be available or be accurately estimated, \mcg{in which case the algorithms developed herein can be directly employed or adapted}. However, for some applications, these critical information or perfect control of transmission rates/durations may not be feasible. \mcg{For these applications, the solution strategy of this paper can serve as a guideline for designing close-to-optimal, practical scheduler designs. From this perspective,} the primary future work of our study is devising more robust algorithms against aforementioned imperfections, while extending and adapting the developed framework to our motivating applications. Other future works include considering non-FIFO ordering and non-convex cost functions, and expanding the two-sided delay constrained framework to multi-hop systems, \mcg{as well as systems with multiple transmitter/receiver nodes}.

%%%%%%%%%%%%%%%%%%%%%%% APPENDICES %%%%%%%%%%%%%%%%%%%%%%%%

\appendices

\section{Proof of Theorem \ref{theo:optimality_twosided}}
\label{ap:proof_of_twosided}

\par The proof is similar in spirit as the pre-delay-only (\cite{chen2008_predelay}) case, in that the proof for the two-sided case also follows from the concept of majorization. Compared to the post-delay-only case, as the subsequent subgroups need not be of non-increasing order, they need to be taken into account for the two-sided case (similar to the pre-delay-only case). Furthermore, compared to the pre-delay-only case, herein, the deadline constraints introduced by both pre- and post-delays need to be considered, which will be reflected in Claims 1-3.

\begin{definition}[\emph{Majorization}]
    \emph{An $M$-element, non-increasing sequence $\boldsymbol{x}_1$ is said to be majorized by another $M$-element, non-increasing sequence $\boldsymbol{x}_2$ if 
    \begin{itemize}
        \item $\sum_{i=1}^{M} x_{1,i} = \sum_{i=1}^{M} x_{2,i}$, and 
        \item $\sum_{i=1}^{m} x_{1,i} \leq \sum_{i=1}^{m} x_{2,i}$, for all $m=1,\dots,M-1$.
    \end{itemize}}
\end{definition}
Note that $\boldsymbol{x}_1$ being majorized by $\boldsymbol{x}_2$ implies that summing up to the same value, $\boldsymbol{x}_1$ is more ``balanced" than $\boldsymbol{x}_2$ \cite{majorization_ref2}. 
    
    \par Recalling from Section \ref{sec:system_model}, the cost $w(\boldsymbol{\tau}) = \sum_{i=1}^{M} w(\tau_i)$ does not depend on the order of the $M$-element sequence $\boldsymbol{\tau}$ and is strictly convex in each $\tau_i$, which makes it Schur-convex (order-preserving). Therefore, if a transmission vector $\boldsymbol{\tau}_1$ is majorized by another transmission vector $\boldsymbol{\tau}_2$, it is guaranteed to have
    $\sum\nolimits_{i=1}^{M} w\left(\tau_{1_i}\right) \leq \sum\nolimits_{i=1}^{M} w\left(\tau_{2_i}\right).$ Motivated by this, we will show herein that the output of Algorithm \ref{alg:general_twosided} (say $\boldsymbol{\tau}$ throughout the proof) gets majorized by any other feasible sequence $\boldsymbol{\tau}'$, which guarantees that $\boldsymbol{\tau}$ incurs the lowest cost among all feasible transmission durations. In particular, we will show %``if $\boldsymbol{\tau}'$ feasible, then $\boldsymbol{\tau}'$ majorizes $\boldsymbol{\tau}$~", 
    the statement by showing the contrapositive ``if some $\boldsymbol{\tau}'$ does not majorize $\boldsymbol{\tau}$, then it is not feasible". 
    
    \par We first start by ordering $\boldsymbol{\tau}'$ and $\boldsymbol{\tau}$ in a non-increasing manner, and define $\boldsymbol{\tau}'^{\downarrow}$ and $\boldsymbol{\tau}^\downarrow$ as the non-increasingly ordered versions of $\boldsymbol{\tau}'$ and $\boldsymbol{\tau}$, respectively. Let $j$ be the smallest index such that $\sum_{i=1}^{j} \tau^\downarrow_{i} > \sum_{i=1}^{j} \tau'^{\downarrow}_{i}$. Then, we know $\tau^\downarrow_{j} > \tau'^{\downarrow}_{j}$. Let $j_{c}$ be the largest index such that $\tau_j = \tau_{j_c}$. Essentially, $\tau_{j_c}$ ``completes" the last incomplete group/subgroup in the batch, yielding an \emph{integer multiple of completed subgroups}. Since $\boldsymbol{\tau}'^{\downarrow}$ is ordered in a non-increasing manner, we have $\tau'^{\downarrow}_{j_c} \leq \dots \leq \tau'^{\downarrow}_j < \tau^{\downarrow}_j = \dots = \tau^{\downarrow}_{j_c},$
    which implies 
    $$\sum\nolimits_{i=1}^{j_c} \tau^\downarrow_{i} > \sum\nolimits_{i=1}^{j_c} \tau'^{\downarrow}_{i}.$$ 
    
    \par Let the collection of packets $\{1,\dots,j_c\}$ of the ordered $\boldsymbol{\tau}^{\downarrow}$ correspond to packets $\{k_1,\dots,k_{j_c}\}$ of the un-ordered, original $\boldsymbol{\tau}^{\downarrow}$. Now let us consider the whole set of packets $\{1,\dots,k_{j_c}\}$ of the original $\boldsymbol{\tau}$. We have two cases: \\
    \textbf{Case 1:} The first case is where no pre-critical subgroups are among the subgroups in $\{1,\dots,k_{j_c}\}$ of $\boldsymbol{\tau}$. For this case, as a corollary of Lemma \ref{lem:optimality_conds_subgroup}, the set $\{1,\dots,j_c\}$ of the ordered $\boldsymbol{\tau}^{\downarrow}$ corresponds to the set $\{1,\dots,j_c\}$ of the un-ordered, original $\boldsymbol{\tau}$, with $k_{j_c} = j_c$.\footnote{Note that if a subgroup exists in $\{1,\dots,k_{j_c}\}$ that is not in $\{1,\dots,j_c\}$ of the ordered $\boldsymbol{\tau}^{\downarrow}$, then the duration of the subsequent subgroup (the first one that is in $\{1,\dots,j_c\}$ of $\boldsymbol{\tau}^{\downarrow}$) would need to have a larger transmission duration than the one that is not in $\{1,\dots,j_c\}$ of $\boldsymbol{\tau}^{\downarrow}$. This would require an \emph{increase} in transmission duration from one subgroup to the subsequent, which can only be achievable by a pre-critical subgroup, yielding a contradiction.} Since there are no pre-critical subgroups in the batch, the last subgroup is either a regular or a post-critical one. \\
    \par \textit{Case 1a:} If index ${j_c}$ corresponds to the last packet of a regular subgroup, then by definition we have $\sum_{i=1}^{j_c} \tau_{i} = \sum_{i=1}^{j_c} d_i$, which yields $\sum_{i=1}^{j_c} \tau'_{i} < \sum_{i=1}^{j_c} d_i$. This is in violation of the non-idling constraints in \eqref{eq:optimization_nonidling1}-\eqref{eq:optimization_nonidling2}, and implies $\boldsymbol{\tau}'$ is infeasible. \\
    \par \textit{Case 1b:} If index ${j_c}$ corresponds to the last packet of a post-critical subgroup, then by definition we have $\sum_{i=1}^{j_c} \tau_{i} = t_R - T_{\mathrm{post},j_c}$, which yields $\sum_{i=1}^{j_c} \tau'_{i} < t_R - T_{\mathrm{post},j_c}$. This is in violation of the post-transmission delay constraint in \eqref{eq:optimization_postdelay}, and implies $\boldsymbol{\tau}'$ is infeasible.
    \\
    \textbf{Case 2:} The second case is where there are pre-critical subgroups among the subgroups in $\{1,\dots,k_{j_c}\}$ of $\boldsymbol{\tau}$. We first make the following claim, with accompanying proof in Footnote \ref{foot:twosidedproof_case2}.
    \\
    \emph{Claim 1:} The last the last subgroup in $\{1,\dots,k_{j_c}\}$ of $\boldsymbol{\tau}$ cannot be a pre-critical subgroup.\footnote{\label{foot:twosidedproof_case2} Suppose the last subgroup was pre-critical. Then the subsequent subgroup would have had to have a larger transmission duration than it, following Item 3 of Lemma \ref{lem:optimality_conds_subgroup}. Hence, the subsequent subgroup would also have been included in $\{1,\dots,j_c\}$ of the ordered $\boldsymbol{\tau}^{\downarrow}$, hence the initial subgroup would not have been the last one, yielding a contradiction.} \\
    We partition $\{1,\dots,k_{j_c}\}$ of $\boldsymbol{\tau}$ into two subsets, noting
    \begin{equation}
        \label{eq:partition}
        \sum\nolimits_{i=1}^{k_{j_c}} \tau_i = \sum_{i \in \{k_1,\dots,k_{j_c}\}}^{}\tau_i + \sum_{i \notin \{k_1,\dots,k_{j_c}\}}^{} \tau_i
    \end{equation}
    We first note that by definition of the set $\{k_1,\dots,k_{j_c}\}$ of $\boldsymbol{\tau}$, every element in the first subset is strictly greater than every element in the second subset. For the first subset, we have 
    \begin{equation}
        \begin{split}
            \sum\nolimits_{i \in \{1,\dots,k_{j_c}\}}^{}\tau_i = \sum\nolimits_{i=1}^{j_c} \tau^\downarrow_{i} > \sum\nolimits_{i=1}^{j_c} \tau^{\downarrow'}_{i} \geq \sum\nolimits_{i=1}^{j_c} \tau'_{i}.\\
        \end{split}
    \end{equation}
    For the second subset, we first observe that it consists of collections that contain one or more, integer multiple of consecutive subgroups. Furthermore, we make the following claims, with proofs in Footnotes \ref{foot:twosidedproof_case2_claim1}-\ref{foot:twosidedproof_case2_claim2}: \\
    \textit{Claim 2:} Each collection ends with a pre-critical subgroup.\footnote{\label{foot:twosidedproof_case2_claim1} Assume otherwise. Then, by Items 1 and 2 of Lemma \ref{lem:optimality_conds_subgroup}, the subsequent subgroup of a post-critical or regular subgroup would have a smaller transmission duration, hence would also be included in the second subset, yielding a contradiction.} \\
    \textit{Claim 3:} A subgroup that is the immediate prior to a collection cannot be a pre-critical subgroup.\footnote{\label{foot:twosidedproof_case2_claim2} Assume the prior subgroup is pre-critical. Then, by Item 3 of Lemma \ref{lem:optimality_conds_subgroup}, the prior subgroup would have a smaller transmission duration, hence would also be included in the second subset hence the collection, yielding a contradiction.}
    \\ Combining Claims 2-3, each collection within the second subset starts at either the arrival of the first packet in the collection (if prior subgroup is regular), or the post-transmission constraint of the last packet of the prior subgroup (if prior subgroup is post-critical). Furthermore, each collection ends with a pre-critical subgroup. Thus, the sums of the transmission durations in each collection are at their maximum possible and cannot be increased further. Otherwise, either a pre-/post-transmission constraint would be violated, or a packet's transmission would start before its arrival, which is clearly a violation. Therefore, we know that for any $\boldsymbol{\tau}'$, we have to have
    \begin{equation}
        \label{eq:notin}
        \begin{split}
            \sum\nolimits_{i \notin \{1,\dots,k_{j_c}\}}^{}\tau_i \geq \sum\nolimits_{i \notin \{1,\dots,k_{j_c}\}}^{} \tau'_i.\\
        \end{split}
    \end{equation}
    Combining equations \eqref{eq:partition}-\eqref{eq:notin}, we have $\sum_{i=1}^{k_{j_c}} \tau_i
    > \sum_{i=1}^{k_{j_c}} \tau'_i$. From Claim 1, the last subgroup in $\{1,\dots,k_{j_c}\}$ of $\boldsymbol{\tau}$ is either regular or post-critical. With a similar argument to Case 1, if the last subgroup is regular, then $\sum_{i=1}^{k_{j_c}} \tau'_i$ violates the non-idling constraint; and if the last subgroup is post-critical, then $\sum_{i=1}^{k_{j_c}} \tau'_i$ violates the post-transmission constraint of the $k_{j_c}$th packet, which makes $\boldsymbol{\tau}'$ infeasible in both scenarios. 
    
    \par Overall, whenever $\boldsymbol{\tau}'$ does not majorize $\boldsymbol{\tau}$, all possible scenarios yield $\boldsymbol{\tau}'$ infeasible. Thus, $\boldsymbol{\tau}$ gets majorized by all other feasible sequences, which concludes the proof. 
    
%\mcg{
\section{Proof of Theorem \ref{theo:converse_post_optimal}}
\label{ap:proof_of_comptime_post}   
    
    We will proceed in a case-by-case manner.
    \\
    \noindent \textbf{Case 2b-i:} This is the case where the post-delay-induced departure lower bound of the $M$th packet occurs last, $t_M < t_R - T_{\mathrm{post},M}$, and the available $w_{\mathrm{max}}$ can achieve $\sum_{i=1}^{M} \tau_i = t_R - T_{\mathrm{post},M}$. Since $T_c \geq t_R - T_{\mathrm{post},M}$ for a FIFO system, critically achieving it achieves the smallest possible completion time, which makes $\boldsymbol{\tau}$ optimal.
    \\
    \noindent \textbf{Cases 1, 2a, and 2b-ii:} These are the cases where $\sum_{i=1}^{M} \tau_i > t_R - T_{\mathrm{post},M}$. Leveraging Lemma \ref{lem:converse_converse}, it suffices to show that i) $w(\boldsymbol{\tau}) = w_{\mathrm{max}}$, and ii) $\boldsymbol{\tau}$ minimizes the energy cost within $[0,\sum_{i=1}^{M} \tau_i]$.
    \begin{enumerate}[\listparindent=1.5em]
        \item Showing $w(\boldsymbol{\tau}) = w_{\mathrm{max}}$ is trivial for all Cases 1, 2a, and 2b-ii, as the algorithm is guaranteed to allocate all available $w_{\mathrm{max}}$ in all cases.
        
        \item \textbf{Energy Optimality of Case 1:} As we are proving energy optimality, the proof will again follow by the majorization argument, similar to the proof of Algorithm \ref{alg:general_twosided}.
        
        \par Let $\boldsymbol{\tau}' \neq \boldsymbol{\tau}$ with $\sum_{i=1}^{M} \tau'_i = \sum_{i=1}^{M} \tau_i.$ Note that by design of Algorithm \ref{alg:converse_post} and Corollary \ref{cor:converse_post_decreasing}, both $\boldsymbol{\tau}'$ and $\boldsymbol{\tau}$ are of non-increasing order. %We will show that if $\boldsymbol{\tau}'$ does not majorize $\boldsymbol{\tau}$, then it has to be infeasible or cause a contradiction.
        %\par 
        Suppose $\boldsymbol{\tau}'$ does not majorize $\boldsymbol{\tau}$, and let $j$ be the smallest index such that $\sum_{i=1}^{j} \tau_{i} > \sum_{i=1}^{j} \tau'_{i}$. Similar to Appendix \ref{ap:proof_of_twosided}, we know $\tau_{j} > \tau'_{j}$. Again, denote $j_{c}$ as the largest index such that $\tau_j = \tau_{j_c}$, thereby yielding $\tau'_{j_c} \leq \dots \leq \tau'_j < \tau_j = \dots = \tau_{j_c},$ and hence
        \begin{equation}
            \label{eq:inequality_majorizationproof}
            \sum\nolimits_{i=1}^{j_c} \tau_{i} > \sum\nolimits_{i=1}^{j_c} \tau'_{i}. 
        \end{equation}
        We have two cases:
        % \begin{itemize}
        %     \item 
        \\
        \textbf{The case with $j_c < M$:} Say Algorithm \ref{alg:converse_post} stops at packet index $k$. By design, the algorithm partitions the set of packets, as well as the time axis $[0,\sum_{i=1}^{M} \tau_i]$ into two parts (we will call them left and right sides): packets $\{1,\dots,k\}$ and $\{k+1,\dots,M\}$, with corresponding intervals $[0,\sum_{i=1}^{k} \tau_i]$ and $[\sum_{i=1}^{k} \tau_i,\sum_{i=1}^{M} \tau_i]$, respectively. The case with $j_c < M$ implies that $j_c$ is on the left side. Then, by the design of Algorithm \ref{alg:converse_post}, we have either $\sum\nolimits_{i=1}^{j_c} \tau_{i} < \sum_{i=1}^{j_c} d_i$ or $\sum\nolimits_{i=1}^{j_c} \tau_{i} < t_R - T_{\mathrm{post},j_c}$. Under both these cases, having \eqref{eq:inequality_majorizationproof} implies $\boldsymbol{\tau}'$ violates either a non-idling or post-delay constraint at packet $j_c$, which makes it infeasible.
        \\
            %\item 
        \textbf{The case with $j_c = M$:} For this case, we have $\sum_{i=1}^{M} \tau'_i < \sum_{i=1}^{M} \tau_i$, which violates the proof's initial supposition of having $\sum_{i=1}^{M} \tau'_i = \sum_{i=1}^{M} \tau_i$, yielding a contradiction.
        %\end{itemize}
        \\
        Thus, for any $j_c$, if $\boldsymbol{\tau}'$ does not majorize $\boldsymbol{\tau}$, it is either infeasible or causes a contradiction.
    \end{enumerate}
    \par Since we showed both $w(\boldsymbol{\tau}) = w_{\mathrm{max}}$ and the energy optimality of $\boldsymbol{\tau}$ between $[0,\sum_{i=1}^{M} \tau_i]$, from Lemma \ref{lem:converse_converse}, $\boldsymbol{\tau}$ is the completion time minimizing vector with the optimal completion time $T^*_c = \sum_{i=1}^{M} \tau_i$. The proofs for Cases 2a and 2b-ii follow identically.

\bibliographystyle{IEEEtran}
\bibliography{refs_GLOBECOM}

%\balance

\end{document}